\newcommand{\Oh}{\mathrm{O}}
\newcommand{\oh}{\mathrm{o}}
\newcommand{\OPT}{\mathrm{OPT}}
\newcommand{\opt}{\mathrm{opt}}
\newcommand{\Ical}{\mathcal{I}}
\newcommand{\ALG}{\ensuremath{\mathrm{ALG}}}
\newcommand{\BAL}{\ensuremath{\mathrm{BAL}}}
\newcommand{\Sorting}{\textsc{Sorting}\xspace}
\newcommand{\Minimum}{\textsc{Minimum}\xspace}
\newcommand{\Selection}{\textsc{Selection}\xspace}
\newcommand{\interval}[6]{
    \draw[#5] (#2, #6) node[anchor=east]{#1} -- (#3, #6);
    \fill[#5] (#4, #6) circle (0.03cm);
}
\newcommand{\intervalsimple}[5]{
    \draw (#2, #5) node[anchor=east]{#1} -- (#3, #5);
    \fill (#4, #5) circle (0.08cm);
}
\title{Round-Competitive Algorithms for Uncertainty Problems with Parallel Queries}
\author{Thomas Erlebach}{School of Informatics, University of Leicester, UK \and \url{https://www.cs.le.ac.uk/people/te17/}}{te17@leicester.ac.uk}{https://orcid.org/0000-0002-4470-5868}{}
\author{Michael Hoffmann}{School of Informatics, University of Leicester, UK}{mh55@leicester.ac.uk}{}{}
\author{Murilo Santos de Lima\footnote{Corresponding author.}}{School of Informatics, University of Leicester, UK \and \url{https://www.ime.usp.br/~mslima/}}{mslima@ic.unicamp.br}{https://orcid.org/0000-0002-2297-811X}{}
\authorrunning{T. Erlebach, M. Hoffmann, and M.\,S. de Lima}
\keywords{online algorithms, competitive analysis, explorable uncertainty, parallel algorithms, minimum problem, selection problem}
\begin{document}

\maketitle

\begin{abstract}
The area of computing with uncertainty considers problems where some information about
the input elements is uncertain, but can be obtained using queries. For example,
instead of the weight of an element, we may be given an interval that is guaranteed
to contain the weight, and a query can be performed to reveal the weight.
While previous work has considered models where queries are
asked either sequentially (adaptive model) or all at once (non-adaptive model),
and the goal is to minimize the number of queries that are needed to solve the
given problem,
we propose and study a new model where~$k$ queries can be made in parallel
in each round, and the goal is to minimize the number of query rounds.
We use competitive analysis and present upper and lower bounds on the number of
query rounds required by any algorithm in comparison with the optimal number
of query rounds.
Given a set of uncertain elements and a family of~$m$ subsets
of that set, we present an algorithm for determining the value of the minimum of each of the
subsets that requires at most $(2+\varepsilon) \cdot \mathrm{opt}_k+\mathrm{O}\left(\frac{1}{\varepsilon} \cdot \lg m\right)$
rounds for every $0<\varepsilon<1$, where $\mathrm{opt}_k$ is the optimal number of rounds,
as well as nearly matching lower bounds.
For the problem of determining the $i$-th smallest value and identifying all elements with that value in a set of uncertain elements, we give a $2$-round-competitive algorithm.
We also show that the problem of sorting a family of sets of uncertain elements admits a $2$-round-competitive algorithm and this is the best possible.
\end{abstract}

\section{Introduction}
Motivated by real-world applications where only rough information about the
input data is initially available but precise information can be obtained at
a cost, researchers have considered a range of {\bf uncertainty problems
with  queries}~\cite{bruce05uncertainty,erlebach16cheapestset,erlebach08steiner_uncertainty,feder07pathsqueires,feder03medianqueries,goerigk15knapsackqueries,megow17mst}. This research area has also been
referred to as {\bf queryable uncertainty}~\cite{erlebach15querysurvey} or {\bf explorable uncertainty}~\cite{focke17mstexp}.
For example, in the input to a sorting problem, we may be given for each
input element, instead of its precise value, only an interval containing that point.
Querying an element reveals its precise value. The goal is
to make as few queries as possible until enough information has been obtained
to solve the sorting problem, i.e., to determine a linear order of the
input elements that is consistent with the linear order of the precise values.
Motivation for explorable uncertainty comes from many different areas (see
\cite{erlebach15querysurvey} and the references given there for further examples):
The uncertain input elements may, e.g., be locations of mobile nodes or
approximate statistics derived from a distributed database cache~\cite{olston2000queries}.
Exact information can be obtained at a cost, e.g., by requesting GPS
coordinates from a mobile node, by querying the master database or by a distributed consensus algorithm.

The main model that has been studied in the explorable uncertainty setting
is the {\bf adaptive query model}: The algorithm makes queries one by one, and
the results of previous queries can be taken into account when determining
the next query. The number of queries made by the algorithm is then compared
with the best possible number of queries for the given input (i.e., the minimum
number of queries sufficient to solve the problem) using competitive
analysis~\cite{borodin98online_alg}. An algorithm is {\bf $\rho$-query-competitive} (or simply $\rho$-competitive)
if it makes at most~$\rho$ times as many queries as an optimal query set.
A very successful algorithm design paradigm in this area is based
on the concept of {\bf witness sets}~\cite{bruce05uncertainty,erlebach08steiner_uncertainty}.
A witness set is a set of input elements
for which it is guaranteed that every query set that solves the problem
contains at least one query in that set. If a problem admits witness sets of size at most~$\rho$,
one obtains a $\rho$-query-competitive algorithm by repeatedly finding
a witness set and querying all its elements.

Some work has also considered the {\bf non-adaptive query model}~(see, e.g., \cite{feder07pathsqueires, merino19matroids, olston2000queries}), where
all queries are made simultaneously and the set of queries must be chosen
in such a way that they certainly reveal sufficient information
to solve the problem. In the non-adaptive query model, one is interested
in complexity results and approximation algorithms.

In settings where the execution of a query
takes a non-negligible amount of time and there are sufficient resources
to execute a bounded number of queries simultaneously, the query process
can be completed faster if queries are not executed one at a time, but
in {\bf rounds} with~$k$ simultaneous queries.
Such scenarios include e.g.~IoT environments (such as drones measuring geographic data), or teams of interviewers doing market research.
Apart from being well motivated from an application point of view,
this variation of the model is also theoretically interesting because it
poses new challenges in selecting a useful set of~$k$ queries to be made simultaneously.
Somewhat surprisingly, however, this has not been studied yet.
In this paper, we address this gap and analyze for the first time a model where
the algorithm can make up to~$k$ queries per round, for a given value~$k$. The query
results from previous rounds can be taken into account when determining the queries
to be made in the next round.
Instead of minimizing the total number of queries, we are interested in
minimizing the number of query rounds, and we say that an algorithm
is {\bf $\rho$-round-competitive} if, for any input, it requires at most~$\rho$ times
as many rounds as the optimal query set.

A main challenge in the setting with $k$ queries per round is that
the witness set paradigm alone is no longer sufficient for obtaining a good algorithm.
For example, if a problem admits witness sets with at most $2$ elements, this immediately
implies a $2$-query-competitive algorithm for the adaptive model, but
only a $k$-round-competitive algorithm for the model with $k$ queries per round.
(The algorithm is obtained by simply querying one witness set in each round, and not making use of the
other $k-2$ available queries.) The issue is that, even
if one can find a witness set of size at most~$\rho$, the identity of subsequent witness sets
may depend on the outcome of the queries for the first witness set, and
hence we may not know how to compute a number of different witness sets
that can fill a query round if $k \gg \rho$.

\subparagraph*{Our contribution.}
Apart from introducing the model of explorable uncertainty with~$k$ queries per round,
we study several problems in this model: \Minimum, \Selection and \Sorting. For \Minimum
(or \Sorting), we assume
that the input can be a family $\mathcal{S}$ of subsets of a given ground set~$\mathcal{I}$ of uncertain elements,
and that we want to determine the value of the minimum of (or sort) all those subsets. For \Selection, we are
given a set $\mathcal{I}$ of $n$ uncertain elements and an index $i \in \{1, \ldots, n\}$, and we
want to determine the $i$-th smallest value of the $n$ precise values, and all the elements
of $\mathcal{I}$ whose value is equal to that value.

Our main contribution lies in our results for the \Minimum problem. We present an
algorithm that requires at most $(2+\varepsilon) \cdot \opt_k+\Oh\left(\frac{1}{\varepsilon} \cdot \lg m\right)$ rounds, for every
$0<\varepsilon<1$,
where~$\opt_k$ is the optimal number of rounds and~$m = |\mathcal{S}|$.
(The execution of the algorithm does not depend on~$\varepsilon$, so the upper bound holds in particular for the best choice of~$0 < \varepsilon < 1$ for given $\opt_k$ and~$m$.)
Interestingly, our algorithm follows a non-obvious approach that is reminiscent of primal-dual
algorithms, but no linear programming formulation features in the analysis.
For the case that the sets in $\mathcal{S}$ are disjoint, we obtain some improved bounds
using a more straightforward algorithm. We also
give lower bounds that apply even to the case of disjoint sets, and show that our upper bounds
are close to best possible. Note that the \Minimum problem is equivalent to the problem
of determining the maximum element of each of the sets in $\mathcal{S}$, e.g., by simply negating
all the numbers involved. A motivation for studying the \Minimum problem thus arises from the
minimum spanning tree problem with uncertain edge weights~\cite{erlebach14mstverification, erlebach08steiner_uncertainty, focke17mstexp, megow17mst}:
Determining the maximum-weight edge of each cycle of a given graph allows
one to determine a minimum spanning tree. Therefore, there is a connection between the problem of
determining the maximum of each set in a family of possibly overlapping sets (which could be
the edge sets of the cycles of a given graph) and the
minimum spanning tree problem.
The minimum spanning tree problem with uncertain edge weights has not been studied yet for the
model with $k$ queries per round, and seems to be difficult for that setting.
In particular, it is not clear in advance for which cycles of the graph a maximum-weight edge actually needs to be determined, and this makes it very difficult to determine a set of $k$ queries that are useful to be asked in parallel.
We hope that our results for \Minimum provide a first step towards solving the minimum spanning tree problem.

Another motivation for solving multiple possibly overlapping sets comes from
distributed database caches~\cite{olston2000queries},
where one wants to answer database queries using cached local data
and a minimum number of queries to the master database.
Values in the local database cache may be uncertain, and exact values can
be obtained by communicating with the central master database.
Different database queries might ask for the record with
minimum value in the field with uncertain information
among a set of database records satisfying certain criteria,
or for a list of such database records sorted by the field
with uncertain information. Answering such database queries
while making a minimum number of queries for exact values
to the master database corresponds to the \Minimum and \Sorting
problems we consider.

For the \Selection problem, we obtain a $2$-round-competitive algorithm.
For \Sorting,
we show that there is a $2$-round-competitive algorithm, by adapting ideas from
a recent algorithm for sorting in the standard adaptive model~\cite{halldorsson19sortingqueries}, and that this is best possible.

We also discuss the relationship between our model and another model of parallel queries proposed by Mei{\ss}ner~\cite{meissner18querythesis}, and we give general reductions between both settings.

\subparagraph*{Literature overview.}
The seminal paper on minimizing the number of queries to solve a problem on uncertainty intervals is by Kahan~\cite{kahan91queries}.
Given~$n$ elements in uncertainty intervals, he presented optimal deterministic adaptive algorithms for finding the maximum, the median, the closest pair, and for sorting.
Olston and Widom \cite{olston2000queries} proposed a distributed database system which exploits uncertainty intervals to improve performance.
They gave non-adaptive algorithms for finding the maximum, the sum, the average and for counting problems.
They also considered the case in which errors are allowed within a given bound, so a trade-off between performance and accuracy can be achieved.
Khanna and Tan \cite{khanna01queries} extended this previous work by investigating adaptive algorithms for the situation in which bounded errors are allowed.
They also considered the case in which query costs may be non-uniform, and presented results for the selection, sum and average problems, and for compositions of such functions.
Feder {\em et al.} \cite{feder03medianqueries} studied the generalized median/selection problem,
presenting optimal adaptive and non-adaptive algorithms.
They proved that those are the best possible adaptive and non-adaptive algorithms, respectively, instead of evaluating them from a competitive analysis perspective.
They also investigated the {\bf price of obliviousness}, which is the ratio between the non-adaptive and adaptive strategies.

After this initial foundation, many classic discrete problems were studied in this framework, including geometric problems~\cite{bruce05uncertainty, charalambous13uncertainty}, shortest paths~\cite{feder07pathsqueires}, network verification~\cite{beerliova06netdiscovery}, minimum spanning tree \cite{erlebach14mstverification, erlebach08steiner_uncertainty, focke17mstexp, megow17mst}, cheapest set and minimum matroid base~\cite{erlebach16cheapestset, merino19matroids}, linear programming~\cite{yamaguchi18ipqueries, ryzhov12lpqueries}, traveling salesman~\cite{welz14thesisqueries}, knapsack~\cite{goerigk15knapsackqueries}, and scheduling~\cite{albers20scheduling, arantes18schedulingqueries, durr2020scheduling}.
The concept of witness sets was proposed by Bruce~{\em et~al.}~\cite{bruce05uncertainty}, and identified as a pattern in many algorithms by Erlebach and Hoffmann~\cite{erlebach15querysurvey}.
Gupta {\em et al.} \cite{gupta16queryselection} extended this framework to the setting where a query may return a refined interval, instead of the exact value of the element.

The problem of sorting uncertainty data has received some attention recently.
Halldórsson and de Lima~\cite{halldorsson19sortingqueries} presented better query-competitive algorithms, by using randomization or assumptions on the underlying graph structure.
Other related work on sorting has considered sorting with noisy information~\cite{ajtai16sortingnoise,braverman09sortingnoisy} or preprocessing the uncertain intervals so that the actual numbers can be sorted efficiently once their precise value are revealed~\cite{vanderhoog19ambiguouspoints}.

The idea of performing multiple queries in parallel was also investigated by Mei{\ss}ner~\cite{meissner18querythesis}.
Her model is different, however.
Each round/batch can query an unlimited number of intervals, but at most a fixed number of rounds can be performed.
The goal is to minimize the total number of queries.
Mei{\ss}ner gave results for selection, sorting and minimum spanning tree problems.
We discuss this model in Section~\ref{sec:meissner}.
A similar model was also studied by Canonne and Gur for property testing~\cite{cannone2018property}.

\subparagraph*{Organization of the paper.}
We present some definitions and preliminary results in Section~\ref{sec:def}.
Sections~\ref{sec:sorting},~\ref{sec:minimum} and~\ref{sec:median} are devoted to the sorting, minimum and selection problems, respectively.
In Section~\ref{sec:meissner}, we discuss the relationship between the model we study and the model of Mei{\ss}ner for parallel queries~\cite{meissner18querythesis}.
We conclude in Section~\ref{sec:future}.

\section{Preliminaries and Definitions}
\label{sec:def}
For the problems we consider, the input consists of a set of~$n$ continuous uncertainty intervals $\Ical = \{I_1, \ldots, I_n\}$ in the real line.
The precise value of each data item is $v_i \in I_i$, which can be learnt by performing a query; formally, a query on~$I_i$ replaces this interval with~$\{v_i\}$.
We wish to solve the given problem by performing the minimum number of queries (or query rounds).
We say that a closed interval $I_i = [\ell_i, u_i]$ is {\bf trivial} if $\ell_i = u_i$; clearly $I_i = \{v_i\}$, so trivial intervals never need to be queried.
Some problems require that intervals are either open or trivial; we will discuss this in further detail when addressing each problem.
For a given realization $v_1, \ldots, v_n$ of the precise values, a set~$Q \subseteq \mathcal{I}$ of intervals is a {\bf feasible query set} if querying~$Q$ is enough to solve the given problem (i.e., to output a solution that can be proved correct based only on the given intervals and the answers to the queries in $Q$), and an {\bf optimal query set} is a feasible query set of minimum size.
Since the precise values are initially unknown to the algorithm and can be defined adversarially, we have an online exploration problem~\cite{borodin98online_alg}.
We fix an optimal query set~$\OPT_1$, and we write $\opt_1 := |\OPT_1|$.
An algorithm which performs up to $\rho \cdot \opt_1$ queries is said to be {\bf $\rho$-query-competitive}.
Throughout this paper, we only consider deterministic algorithms.

In previous work on the adaptive model, it is assumed that queries are  made sequentially,
and the algorithm can take the results of all previous queries into account when deciding
the next query.
We consider a model where queries are made in {\bf rounds} and we can perform up to~$k$ queries in parallel in each round.
The algorithm can take into account the results from
all queries made in previous rounds when deciding which queries to make in the next
round. The adaptive model with sequential queries is the special case of our model
with~$k=1$.
We denote by~$\opt_k$ the optimal number of rounds to solve the given instance.
Note that $\opt_k = \lceil \opt_1 / k \rceil$ as
$\OPT_1$ only depends on the input intervals and their precise values and can be distributed into rounds of $k$ queries arbitrarily.
For an algorithm $\ALG$ we denote by $\ALG_1$ the number of queries it makes, and
by $\ALG_k$ the number of rounds it uses.
An algorithm which solves the problem in up to $\rho \cdot \opt_k$ rounds is said to be {\bf $\rho$-round-competitive}.
A query performed by an algorithm that is not in~$\OPT_1$ is called a {\bf wasted} query, and we say that the algorithm {\bf wastes} that query;
a query performed by an algorithm that is not wasted is {\bf useful}.

\begin{proposition}
\label{prop:wasted}%
If an algorithm makes all queries in $\OPT_1$, wastes $w$ queries in total over all rounds excluding the final round,
always makes $k$ queries per round except possibly in the final round, and stops as soon as the queries made so far suffice to solve the problem, then its number of rounds will be
$\left\lceil (\opt_1 + w)/k \right\rceil \leq \opt_k + \lceil w / k \rceil$.
\end{proposition}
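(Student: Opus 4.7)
The plan is to count queries round by round and then apply subadditivity of the ceiling function. Let $r$ denote the total number of rounds performed by the algorithm. The first $r-1$ rounds each contain exactly $k$ queries, and by hypothesis $w$ of the $k(r-1)$ queries made in those rounds are wasted; hence $k(r-1) - w$ of them lie in $\OPT_1$.

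The key step is to show $r \leq \lceil (\opt_1 + w)/k \rceil$. After round $r-1$ the algorithm has not yet terminated (otherwise $r$ would be smaller), so its queries so far do not suffice to solve the problem; since $\OPT_1$ would suffice, these queries cannot contain all of $\OPT_1$. Hence $k(r-1) - w < \opt_1$, which rearranges to $r - 1 < (\opt_1 + w)/k$, and since $r - 1$ is a non-negative integer this yields $r \leq \lceil (\opt_1 + w)/k \rceil$. The matching lower bound is immediate from the first hypothesis: the algorithm eventually performs every query in $\OPT_1$, contributing $\opt_1$ useful queries, to which we add the $w$ wasted queries from the first $r-1$ rounds, giving a total of at least $\opt_1 + w$ queries distributed over $r$ rounds of at most $k$ queries each; thus $r \geq \lceil (\opt_1 + w)/k \rceil$, and combining with the upper bound gives equality.

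To conclude, I would invoke the standard subadditivity inequality $\lceil (a + b)/k \rceil \leq \lceil a/k \rceil + \lceil b/k \rceil$ with $a = \opt_1$ and $b = w$, together with the identity $\opt_k = \lceil \opt_1/k \rceil$ recorded in Section~\ref{sec:def}. The most delicate point of the argument is the justification that the algorithm has not stopped after round $r-1$: this combines the stopping rule with the feasibility of $\OPT_1$ to rule out the possibility that all of $\OPT_1$ has already been queried, so that the useful count in the first $r-1$ rounds is strictly below $\opt_1$. Once this is set up, the rest of the proof is routine bookkeeping.
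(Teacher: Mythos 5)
Your proof is correct: the paper states Proposition~\ref{prop:wasted} without proof, treating it as immediate, and your counting argument (useful queries in the first $r-1$ full rounds fall short of $\opt_1$ because the algorithm has not yet stopped, giving the upper bound; total queries at least $\opt_1+w$, giving the lower bound; then ceiling subadditivity and $\opt_k=\lceil\opt_1/k\rceil$) is exactly the intended justification, with all the relevant details filled in correctly.
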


The problems we consider are \Minimum, \Sorting and \Selection. For \Minimum and \Sorting,
we assume that we are given a set $\mathcal{I}$ of $n$ intervals and a family $\mathcal{S}$ of $m$
subsets of $\mathcal{I}$. 
For \Sorting, the task is to output, for each set $S\in\mathcal{S}$,
an ordering of the elements in $S$ that is consistent with the order of their precise values.
For \Minimum, the task is to output, for each $S\in\mathcal{S}$,
an element whose precise value is the minimum of the precise values of all elements in $S$,
along with the value of that element.\footnote{In some of the literature, it is only required to identify the element with minimum value. Returning the precise minimum value, however, is also an important problem, as discussed in \cite[Section~7]{megow17mst} for the minimum spanning tree problem.}
Regarding the family~$\mathcal{S}$, we can distinguish the cases where $\mathcal{S}$ contains
a single set, where all sets in~$\mathcal{S}$ are pairwise disjoint, and the case
where the sets in $\mathcal{S}$ may overlap, i.e., may have common elements.
For \Selection, we are given a set $\mathcal{I}$ of $n$
intervals and an index~$i \in \{1, \ldots, n\}$. The task is to output the $i$-th smallest
value $v^*$ (i.e., the value in position $i$ in a sorted list of the precise values of
the $n$ intervals), as well as the set of intervals whose precise value equals~$v^*$.
We also discuss briefly a variant of \Minimum in which we seek all elements whose precise value is the minimum and a variant of \Selection in which we only seek the value~$v^*$.

For a better understanding of the problems, we give a simple example for \Sorting with $k = 1$.
We have a single set with two intersecting intervals.
There are four different configurations of the realizations of the precise values, which are shown in Figure~\ref{fig:sorting}.
In Figure~\ref{fig:sorting1}, it is enough to query~$I_1$ to learn that $v_1 < v_2$; however, if an algorithm first queries~$I_2$, it cannot decide the order, so it must query~$I_1$ as well.
In Figure~\ref{fig:sorting2} we have a symmetric situation.
In Figure~\ref{fig:sorting3}, both intervals must be queried (i.e., the only feasible query set is $\{I_1,I_2\}$),
otherwise it is not possible to decide the order.
Finally, in Figure~\ref{fig:sorting4} it is enough to query either~$I_1$ or~$I_2$; hence, both $\{I_1\}$ and $\{I_2\}$ are feasible query sets.
Since those realizations are initially identical to the algorithm, this example shows that no deterministic algorithm can be better than $2$-query-competitive, and this example can be generalized by taking multiple copies of the given structure.
For \Minimum, however, an optimum solution can always be obtained by first querying~$I_1$ (and then $I_2$ only if necessary): Since we need the precise value of the minimum element, in Figure~\ref{fig:sorting2} it is not enough to just query~$I_2$.

\begin{figure}[bt]
  \begin{subfigure}[t]{0.25\textwidth}
   \centering
   \begin{tikzpicture}[thick, scale=0.95]
    \intervalsimple{$I_1$}{0}{2}{0.5}{0}
    \intervalsimple{$I_2$}{1}{3}{1.5}{0.5}
   \end{tikzpicture}
   \caption{}
   \label{fig:sorting1}
  \end{subfigure}\hfill
  \begin{subfigure}[t]{0.25\textwidth}
   \centering
   \begin{tikzpicture}[thick, scale=0.95]
    \intervalsimple{$I_1$}{0}{2}{1.5}{0}
    \intervalsimple{$I_2$}{1}{3}{2.5}{0.5}
   \end{tikzpicture}
   \caption{}
   \label{fig:sorting2}
  \end{subfigure}\hfill
  \begin{subfigure}[t]{0.25\textwidth}
   \centering
   \begin{tikzpicture}[thick, scale=0.95]
    \intervalsimple{$I_1$}{0}{2}{1.6}{0}
    \intervalsimple{$I_2$}{1}{3}{1.4}{0.5}
   \end{tikzpicture}
   \caption{}
   \label{fig:sorting3}
  \end{subfigure}\hfill
  \begin{subfigure}[t]{0.25\textwidth}
   \centering
   \begin{tikzpicture}[thick, scale=0.95]
    \intervalsimple{$I_1$}{0}{2}{0.5}{0}
    \intervalsimple{$I_2$}{1}{3}{2.5}{0.5}
   \end{tikzpicture}
   \caption{}
   \label{fig:sorting4}
  \end{subfigure}
  \caption{Example of \Sorting for two intervals and the possible realizations of the precise values.
  We have that $\opt_1 = 1$ in (\subref{fig:sorting1}), (\subref{fig:sorting2}) and (\subref{fig:sorting4}), and $\opt_1 = 2$ in (\subref{fig:sorting3}).
  }
  \label{fig:sorting}
\end{figure}

\section{Sorting}
\label{sec:sorting}

In this section we discuss the \Sorting problem.
We allow open, half-open, closed, and trivial intervals in the input, i.e., $I_i$ can be of the form $[\ell_i, u_i]$ with $\ell_i \leq u_i$, or $(\ell_i, u_i]$, $[\ell_i, u_i)$ or $(\ell_i, u_i)$ with $\ell_i < u_i$.

First, we consider the case where $\mathcal{S}$ consists of a single set~$S$, which we can assume to contain all~$n$ of the given intervals.
We wish to find a permutation $\pi : [n] \rightarrow [n]$ such that $v_i \leq v_j$ if $\pi(i) < \pi(j)$, by performing the minimum number of queries possible.
This problem was addressed for $k = 1$ in \cite{halldorsson19sortingqueries, kahan91queries, meissner18querythesis}; it admits 2-query-competitive deterministic algorithms and has a deterministic lower bound of~$2$.

For \Sorting, if two intervals $I_i = [\ell_i, u_i]$ and $I_j = [\ell_j, u_j]$ are such that $I_i \cap I_j = \{ u_i \} = \{ \ell_j \}$, then we can put them in a valid order without any further queries, because clearly $v_i \leq v_j$.
Therefore, we say that two intervals~$I_i$ and~$I_j$ {\bf intersect} (or are {\bf dependent}) if either their intersection contains more than one point, or if $I_i$ is trivial and $v_i \in (\ell_j, u_j)$ (or \emph{vice versa}).
This is equivalent to saying that $I_i$ and $I_j$ are dependent if and only if $u_i > \ell_j$ and $u_j > \ell_i$.
Two simple facts are important to notice, which are proven in~\cite{halldorsson19sortingqueries}:
\begin{itemize}
 \item For any pair of intersecting intervals, at least one of them must be queried in order to decide their relative order; i.e., any intersecting pair is a witness set.
 \item The \textbf{dependency graph} that represents this relation, with a vertex for each interval and an edge
between intersecting intervals, is an interval graph~\cite{lekkeikerker62interval}.
\end{itemize}

We adapt the $2$-query-competitive algorithm for \Sorting by Halldórsson and de Lima~\cite{halldorsson19sortingqueries} for $k = 1$ to the case of arbitrary~$k$.
Their algorithm first queries all non-trivial intervals in a minimum vertex cover in the dependency graph.
By the duality between vertex covers and independent sets, the unqueried intervals form an independent set, so no query is necessary to decide the order between them.
However, the algorithm still must query intervals in the independent set that intersect a trivial interval or the value of a queried interval.
To adapt the algorithm
to the case of arbitrary~$k$, we first compute a minimum vertex cover and fill as many rounds as necessary
with the given queries. After the answers to the queries are returned,
we use as many rounds as necessary to query the intervals of the remaining independent set
that contain a trivial point.

\begin{theorem}
\label{teo:sortingvc}
The algorithm of Halldórsson and de Lima~\cite{halldorsson19sortingqueries} yields a $2$-round-competitive algorithm for \Sorting that runs in polynomial time.
\end{theorem}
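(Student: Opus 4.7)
The plan is to verify three properties of the algorithm in turn: correctness (the final state of information suffices to sort), the $2\cdot\opt_k$ upper bound on rounds, and the polynomial-time implementation. Let $M$ denote the set of intervals queried in Phase~1 (non-trivial intervals of a minimum vertex cover of the dependency graph) and $P$ the set of intervals queried in Phase~2 (independent-set intervals that strictly contain an originally trivial interval or the revealed value of some element of $M$).

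For \textbf{correctness}, after Phase~1 the unqueried intervals form an independent set in the dependency graph, so any two of them are pairwise non-intersecting and thus automatically orderable from the initial data. After Phase~2, every still-unqueried interval $I_j$ has the property that no known point lies strictly inside $I_j$: every originally trivial interval and every value $v_i$ revealed in Phase~1 lies outside the open interior of $I_j$. Hence $v_j$'s position relative to each known point is forced by $I_j$'s bounds, and combined with the independence argument above, this gives a valid ordering of the elements of each set in $\mathcal{S}$.

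For the \textbf{round bound}, I plan to show the stronger statement that both $|M|\le\opt_1$ and $|P|\le\opt_1$. The first inequality is immediate: every intersecting pair of non-trivial intervals is a two-element witness set, so any feasible query set must cover every edge of the dependency graph; since $M$ is a minimum vertex cover (restricted to non-trivial intervals), $|M|\le\opt_1$. For the second inequality, I would argue that in fact $P\subseteq\OPT_1$. Take $I_j\in P$ and the trivial point $p$ strictly inside $I_j$ that triggered the Phase~2 query. If $p$ is the value of an originally trivial interval $I_t$, then $\{I_t,I_j\}$ is a witness set but $I_t$ cannot be queried, so $I_j\in\OPT_1$. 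If $p=v_i$ for some $I_i\in M$, then even knowing $v_i$ leaves the relative order of $I_i$ and $I_j$ undetermined because $v_i\in(\ell_j,u_j)$, so any feasible query set must still query $I_j$, forcing $I_j\in\OPT_1$. From $|M|,|P|\le\opt_1$ we get that Phase~1 uses $\lceil|M|/k\rceil\le\opt_k$ rounds and Phase~2 uses $\lceil|P|/k\rceil\le\opt_k$ rounds, totaling at most $2\opt_k$.

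For \textbf{polynomial time}, the dependency graph is an interval graph~\cite{lekkeikerker62interval}, and a minimum vertex cover of an interval graph is computable in polynomial time; the remaining bookkeeping is elementary interval comparisons. The \textbf{main obstacle} is the Phase~2 accounting: one must verify, case by case, that a Phase~2 query $I_j$ is forced into $\OPT_1$ rather than merely coverable by the vertex-cover neighbour $I_i\in M$, because otherwise it would be tempting to double-count $M\cup P$ against a single $\OPT_1$ and lose a factor. The Case~2 argument, relying on the fact that $v_i$ lies in the \emph{open} interior of $I_j$, is precisely what turns the inequality $|P|\le|\OPT_1|$ into the set containment $P\subseteq\OPT_1$ and makes the two phases independently charge against $\opt_k$.
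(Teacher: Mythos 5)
Your proposal is correct and follows essentially the same route as the paper's proof: Phase~1 is charged against $\opt_1$ via the observation that every feasible query set is a vertex cover of the dependency graph, and Phase~2 is charged against $\opt_1$ because its queries are forced into every feasible query set, giving $\lceil|M|/k\rceil+\lceil|P|/k\rceil\le 2\opt_k$. The only difference is that you spell out the witness-set case analysis showing $P\subseteq\OPT_1$, which the paper states without detail.
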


\begin{proof}
Any feasible query set is a vertex cover in the dependency graph, due to the fact that at least one interval in each intersecting pair must be queried.
Therefore a minimum vertex cover is at most the size of an optimal query set, so the first phase of the algorithm spends at most~$\opt_k$ rounds.
Since all intervals queried in the second phase are in any solution, again we spend at most another~$\opt_k$ rounds. As the minimum vertex cover problem for interval graphs can
be solved in polynomial time~\cite{gavril72chordal}, the overall algorithm is polynomial as well.
\end{proof}

The problem has a lower bound of 2 on the round-competitive factor. This can be shown by having $k c$ copies of a structure consisting of two dependent intervals, for some $c \geq 1$.
$\OPT_1$~may query only one interval in each pair, while we can force any deterministic algorithm to query both of them (cf.\ the configurations
shown in Figures~\ref{fig:sorting1} and~\ref{fig:sorting2}).
We have that $\opt_k = c$ while any deterministic algorithm will spend at least~$2c$ rounds.

We remark that the $2$-query-competitive algorithm for \Sorting with $k=1$ due to
Mei{\ss}ner~\cite{meissner18querythesis}, when adapted to the setting with arbitrary $k$
in the obvious way, only gives a bound of $2 \cdot \opt_k + 1$ rounds.
Her algorithm first greedily computes a maximal matching in the dependency graph and queries all non-trivial matched vertices, and then all
remaining intervals that contain a trivial point.

\vspace{\baselineskip}

Now we study the case of solving a
number of problems on different subsets of the same ground set of uncertain elements.
In such a setting, it may be better to perform queries that can be reused by different problems, even if the optimum solution for one problem may not query that interval.
We can reuse ideas from the algorithms for single problems that rely on the dependency graph.
We define a new dependency relation (and dependency graph) in such a way that two intervals are dependent if and only if they intersect {\em and} belong to a common set.
Note that the resulting graph may not be an interval graph, so some algorithms for single problems may not run in polynomial time for this generalization.

If we perform one query at a time ($k = 1$), then there are $2$-competitive algorithms.
One such is the algorithm by Mei{\ss}ner~\cite{meissner18querythesis} described above; since a maximal matching can be computed greedily in polynomial time for arbitrary graphs, this algorithm runs in polynomial time for non-disjoint problems.
If we can make $k \geq 2$ queries in parallel, then this algorithm performs
at most $2 \cdot \opt_k + 1$ rounds, and the analysis is tight since we may have an incomplete round in between the two phases of the algorithm.
If we relax the requirement that the algorithm runs in polynomial time, then we can obtain an algorithm that needs at most $2 \cdot \opt_k$ rounds, by first querying non-trivial intervals in a minimum vertex cover of the dependency graph (in as many rounds as necessary) and then the intervals that contain a trivial interval or the value of a queried interval (again, in as many rounds as necessary).

\section{The Minimum Problem}
\label{sec:minimum}

For the \Minimum problem, we assume without loss of generality that the intervals
are sorted by non-decreasing left endpoints; intervals with the same left endpoint can be ordered arbitrarily.
The {\bf leftmost} interval among a subset of~$\mathcal{I}$ is the one that comes earliest in this ordering.
We also assume that all
intervals are open or trivial; otherwise the problem has a trivial lower bound of~$n$ on the query-competitive ratio~\cite{gupta16queryselection}.

First, consider the case $\mathcal{S} = \{\mathcal{I}\}$, i.e., we have a single set. It is easy to see
that the optimal query set consists of all intervals whose left endpoint is strictly
smaller than the precise value of the minimum: If~$I_i$ with precise value~$v_i$
is a minimum element, then all other intervals with left endpoint strictly smaller than~$v_i$ must be queried to rule out that their value is smaller than~$v_i$, and~$I_i$ must be queried (unless it is a trivial interval) to determine the value of the minimum.
The optimal set of queries is hence a \emph{prefix} of the sorted list of uncertain
intervals (sorted by non-decreasing left endpoint).
This shows that there is a $1$-query-competitive algorithm when $k=1$,
and a $1$-round-competitive algorithm for arbitrary~$k$: In each round we simply
query the next~$k$ uncertain intervals in the order of non-decreasing left endpoint,
until the problem is solved.
For $k=1$, the same method yields a $1$-query-competitive algorithm for the case
with several sets: The algorithm can always query an interval with smallest left
endpoint for any of the sets that have not yet been solved.%
\footnote{If we want to determine all elements whose value equals the minimum, it is not hard to see that the optimal set of queries for each set is again a prefix. As all our
algorithms require only this property, we obtain corresponding results for that
problem variant, even for inputs with arbitrary closed, open and half-open intervals.}

In the remainder of
this section, we consider the case of multiple sets and $k>1$.
We first present a more general result for potentially overlapping sets, then we give better upper bounds for disjoint sets.
At the end of the section, we also present lower bounds.

Let $W(x)=x \lg x$; the inverse $W^{-1}$ of $W$ will show up in our analysis.
Note that $W^{-1}(x)=\Theta(x / \lg x)$ (see Appendix~\ref{app:W-1} for a proof).

Throughout this section, we assume w.l.o.g.\ that the optimum must make at least one query in each
set (or we consider only sets that require some query).
We also assume that any algorithm always discards from each
set all elements that are certainly not the minimum of that set, i.e., all elements
for which it is already clear based on the available information that their
value must be larger than the minimum value of the set (this is where
the right endpoints of intervals also need to be considered).
We adopt the following terminology.
A set in~$\mathcal{S}$ is {\bf solved} if we can determine the value of its minimum element.
A set is {\bf active}
at the start of a round if the queries made in previous
rounds have not solved the set yet. An active
set {\bf survives} a round if it is still active
at the start of the next round. An active set
that does not survive the current round is said to
be {\bf solved in} the current round.

To illustrate these concepts, let us discuss a first simple strategy to build a query set~$Q$ for a round.
Let~$\mathcal{P}$ be the set of intervals queried in previous rounds.
The {\bf prefix length} of an active set~$S$ is the length of the maximum prefix of elements from~$Q$ in the list of non-trivial intervals in $S \setminus \mathcal{P}$ ordered by non-decreasing left endpoints.
The algorithm proceeds by repeatedly adding to~$Q$ the leftmost non-trivial element not in~$Q \cup \mathcal{P}$ from an arbitrary active set with minimum prefix length.
We call this the {\bf balanced} algorithm, and denote it by $\BAL$.
We give an example of its execution in Figure~\ref{fig:bal}, with $m = 3$ disjoint sets and $k = 5$.
The optimum solution queries the first three elements in~$S_1$ and~$S_2$, and all elements in~$S_3$.
Since the algorithm picks an arbitrary active set with minimum prefix length, it may give preference to~$S_1$ and~$S_2$ over~$S_3$, thus wasting one query in~$S_1$ and one in~$S_2$ in round~2.
All sets are active at the beginning of round~2;
$S_1$~and~$S_2$ are solved in round~2, while~$S_3$ survives round~2.
Since~$S_1$ and~$S_2$ are solved in round~2, they are no longer active in round~3, so the algorithm no longer queries any of their elements.

\newcommand{\elementbox}[3]{
  \draw[#3] (#1, #2) rectangle (#1 + 1.2, #2 - 1);
}

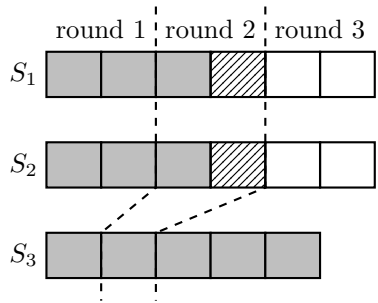
\begin{figure}[tb]
  \centering
  \begin{tikzpicture}[thick, scale=0.6]
    \elementbox{0}{0}{fill=gray!50}
    \elementbox{1.2}{0}{fill=gray!50}
    \elementbox{2.4}{0}{fill=gray!50}
    \elementbox{3.6}{0}{pattern=north east lines}
    \elementbox{4.8}{0}{}
    \elementbox{6.0}{0}{}

    \elementbox{0}{-2}{fill=gray!50}
    \elementbox{1.2}{-2}{fill=gray!50}
    \elementbox{2.4}{-2}{fill=gray!50}
    \elementbox{3.6}{-2}{pattern=north east lines}
    \elementbox{4.8}{-2}{}
    \elementbox{6.0}{-2}{}

    \elementbox{0}{-4}{fill=gray!50}
    \elementbox{1.2}{-4}{fill=gray!50}
    \elementbox{2.4}{-4}{fill=gray!50}
    \elementbox{3.6}{-4}{fill=gray!50}
    \elementbox{4.8}{-4}{fill=gray!50}
    
    \draw[dashed] (2.4, 1) -- (2.4, -3) -- (1.2, -4) -- (1.2, -5.5);
    \draw[dashed] (4.8, 1) -- (4.8, -3) -- (2.4, -4) -- (2.4, -5.5);
    
    \node at (-0.5, -0.5) {$S_1$};
    \node at (-0.5, -2.5) {$S_2$};
    \node at (-0.5, -4.5) {$S_3$};

    \node at (1.2, 0.5) {round 1};
    \node at (3.6, 0.5) {round 2};
    \node at (6.0, 0.5) {round 3};
  \end{tikzpicture}
  \caption{Possible execution of $\BAL$ for $m = 3$ disjoint sets and $k = 5$.
  Each interval is represented by a box, and the optimum solution is a prefix of each set.
  The solid boxes are useful queries, the two hatched boxes are wasted queries, and the white boxes are not queried by the algorithm.
  }
  \label{fig:bal}
\end{figure}

\subsection{The Minimum Problem with Arbitrary Sets}

We are given a set $\mathcal{I}$ of $n$ intervals and a family $\mathcal{S}$ of $m$ possibly overlapping subsets of~$\mathcal{I}$, and a number $k \geq 2$ of queries that can be performed in each round.

Unfortunately, it is possible to construct an instance in which $\BAL$ uses as many as $k \cdot \opt_k$ rounds.
Let~$c$ be a multiple of~$k$.
We have $m = c \cdot (k - 1)$ sets, which are divided in~$c$ groups with $k-1$ sets.
For $i = 1, \ldots, c$, the sets in groups $i, \ldots, c$ share the~$i$ leftmost elements.
Furthermore, each set has one extra element which is unique to that set.
The precise values are such that each set in the $i$-th group is solved after querying the first~$i$ elements.
We give an example in Figure~\ref{fig:badbal} with $k = 3$ and $c = 3$.
If we let $\BAL$ query the intervals in the order given by the indices, it is easy to see that it queries $c \cdot k$ intervals, while the~$c$ intervals that are shared by more than one set are enough to solve all sets.
In particular, note that $\BAL$ does not take into consideration that some elements are shared between different sets.
The challenge is how to balance queries between sets in a better way.

\newcommand{\bset}{
   \begin{tikzpicture}[thick, scale=2.5]
}
\newcommand{\eset}{
   \end{tikzpicture}
}
\newcommand{\ione}{\interval{$I_1$}{0}{1}{0.75}{blue}{0}}
\newcommand{\itwo}{\interval{$I_2$}{0.85}{1.85}{1.4}{black}{-0.2}}
\newcommand{\ithree}{\interval{$I_3$}{0.85}{1.85}{1.5}{black}{-0.2}}
\newcommand{\ifour}{\interval{$I_4$}{0.15}{1.15}{0.65}{red}{-0.2}}
\newcommand{\ifive}{\interval{$I_5$}{0.85}{1.85}{1.3}{black}{-0.4}}
\newcommand{\isix}{\interval{$I_6$}{0.85}{1.85}{1.55}{black}{-0.4}}
\newcommand{\iseven}{\interval{$I_7$}{0.3}{1.3}{0.55}{violet}{-0.4}}
\newcommand{\ieight}{\interval{$I_8$}{0.85}{1.85}{1.6}{black}{-0.6}}
\newcommand{\inine}{\interval{$I_9$}{0.85}{1.85}{1.7}{black}{-0.6}}

\begin{figure}[ht!]
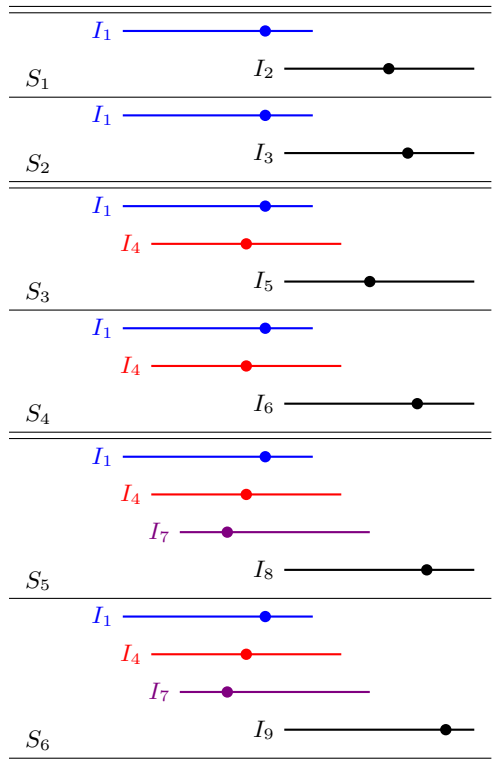

\centering
\begin{tabular}{ll}
 \hline
 \hline
 $S_1$ & \bset \ione \itwo \eset \\
 \hline
 $S_2$ & \bset \ione \ithree \eset \\
 \hline
 \hline
 $S_3$ & \bset \ione \ifour \ifive \eset \\
 \hline
 $S_4$ & \bset \ione \ifour \isix \eset \\
 \hline
 \hline
 $S_5$ & \bset \ione \ifour \iseven \ieight \eset \\
 \hline
 $S_6$ & \bset \ione \ifour \iseven \inine \eset \\
 \hline
 \hline
\end{tabular}

\caption{Bad instance for $\BAL$ with overlapping sets, with $k = 3$ and $c = 3$. $\BAL$ will query the following rounds: $\{I_1, I_2, I_3\}, \{I_4, I_5, I_6\}, \{I_7, I_8, I_9\}$. It is enough to query $\{I_1, I_4, I_7\}$.}
\label{fig:badbal}
\end{figure}

We give an algorithm that requires at most $(2+\varepsilon) \cdot \opt_k + \Oh\left(\frac{1}{\varepsilon} \cdot \lg m\right)$ rounds, for every $0 < \varepsilon < 1$.
(The execution of the algorithm does not depend on~$\varepsilon$, so the upper bound holds in particular for the best choice of~$0 < \varepsilon < 1$ for given $\opt_k$ and~$m$.)
It is inspired by how some primal-dual algorithms work.
The pseudocode for determining the queries to be made in
a round is shown in Algorithm~\ref{fig:algA}.
First, we try to include the leftmost element of each set in the set of queries~$Q$.
If those are not enough to fill a round, then we maintain a variable~$b_i$ for each set~$S_i$, which can be interpreted as a budget for each set.
The variables are increased simultaneously at the same rate, until the sets that share a current leftmost unqueried element not in~$Q$ have enough budget to buy it.
More precisely, at a given point of the execution, for each element $e \in \mathcal{I} \setminus Q$, let~$F_e$ contain the indices of the sets that have~$e$ as their leftmost unqueried element not in~$Q$.
We include~$e$ in~$Q$ when $\sum_{i \in F_e} b_i = 1$, and then we set~$b_i$ to zero for all $i \in F_e$.
We repeat this process until $|Q| = k$ or there are no unqueried elements in $\mathcal{I} \setminus Q$.

\begin{algorithm}[tb]
\KwData{family $\mathcal{S} = \{S_1,\ldots,S_m\}$ of active subsets of the ground set~$\mathcal{I}$}
\KwResult{set $Q \subseteq \mathcal{I}$ of at most $k$ queries to make}
\Begin{%
  $Q\leftarrow$ set of leftmost unqueried elements of all sets in $\mathcal{S}$\;
  \eIf{$|Q|\ge k$}{
  	$Q\leftarrow$ arbitrary subset of $Q$ with size $k$\;
  }{
  	$b_i \leftarrow 0$ for all $S_i \in \mathcal{S}$\;
	\While{$|Q|<k$ and there are unqueried elements in $\mathcal{I}\setminus Q$}{
		\ForEach{$e \in \mathcal{I} \setminus Q$}{
			$F_e \leftarrow \{i\mid$ $e$
			is the leftmost unqueried element from
			$\mathcal{I}\setminus Q$ in $S_i \}$\;
		}
		increase all $b_i$ simultaneously at the same rate
		until there is an unqueried element $e\in \mathcal{I}\setminus Q$ that
		satisfies
		$\sum_{i \in F_e} b_i= 1$\;
		$Q\leftarrow Q \cup \{e\}$\;
		$b_i \leftarrow 0$ for all $i\in F_e$\;
	}
  }
  \Return{$Q$}\;
}
\caption{Computing a query round for possibly non-disjoint sets}
\label{fig:algA}
\end{algorithm}

When a query $e$ is added to $Q$, we say that it is {\bf charged}
to the sets $S_i$ with $i\in F_e$. The amount of charge
for set $S_i$ is equal to the value of $b_i$ just before $b_i$
is reset to~$0$ after adding~$e$ to~$Q$. We also say
that the set $S_i$ {\bf pays} this amount for~$e$.

\begin{definition}
Let $\varepsilon > 0$.
A round is {\bf $\varepsilon$-good} if at least $k/2$ of the queries
made by Algorithm~\ref{fig:algA} are also in $\OPT_1$ (i.e., are useful queries),
or if at least
$a/r$ active sets are solved in that round, where $a$
is the number of active sets at the start of the round
and $r = (2(1 + \varepsilon) + \sqrt{2\varepsilon^2 +4\varepsilon + 4})/\varepsilon$.
A round that is not $\varepsilon$-good is called {\bf $\varepsilon$-bad}.
\end{definition}

Note that $r > 2$ for any $\varepsilon > 0$.

\begin{lemma}
\label{lem:badgood}%
If a round is $\varepsilon$-bad, then Algorithm~\ref{fig:algA} will make at least $2k/(2 + \varepsilon)$ useful queries
in the following round.
\end{lemma}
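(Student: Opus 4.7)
The plan is to use both $\varepsilon$-bad conditions together: the abundance of wasted queries in round $t$ combined with the scarcity of sets solved in round $t$ will imply that many useful queries are forced in round $t+1$. The high-level picture is that every wasted query in round $t$ comes paired with a preceding useful query that fixes the minimum of some set $S_i$, and the many sets whose minimum is identified in round $t$ without the set being solved must remain active in round $t+1$, where they immediately contribute useful queries.

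The first step is a structural pairing claim: whenever an element $e$ is wasted in round $t$ with $i \in F_e$ at the time $e$ is added, the minimum element $m_i^*$ of $S_i$ must lie in $Q$ already, placed there earlier in the same round. Indeed, $e \notin \OPT_1 \cap S_i$ gives $\ell_e \geq v_i^*$, and since intervals are open we have $\ell_{m_i^*} < v_i^* \leq \ell_e$. If $m_i^*$ had been queried in an earlier round, the algorithm would know $v_i^*$ and would already have discarded $e$, contradicting $i \in F_e$. And if $m_i^*$ were still in $S_i \setminus Q$ at the moment $e$ is considered, it would be a candidate strictly to the left of $e$, contradicting that $e$ is the leftmost remaining element of $S_i$ at that moment.

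Applying the same pairing argument to round $t+1$ shows that the leftmost remaining element of every set active at the start of round $t+1$ is useful, so the initial seed of $Q$ in round $t+1$ consists entirely of queries in $\OPT_1$. Since $s_t < a_t/r$, the number of active sets satisfies $a_{t+1} > a_t(r-1)/r$, which is large. Furthermore, the pairing applies inside round $t+1$ itself: any wasted query added via the budget phase must be preceded in the same round by the addition of some $m_j^*$, which is useful. Thus wasted insertions in round $t+1$ can be charged back to earlier useful insertions, and by combining this charging with the lower bound on $a_{t+1}$ and the rate at which budget accumulates I would extract the bound $u_{t+1} \geq 2k/(2+\varepsilon)$.

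The main obstacle is the tight quantitative accounting. The value $r = (2(1+\varepsilon)+\sqrt{2\varepsilon^2+4\varepsilon+4})/\varepsilon$ is visibly the larger root of a quadratic in $\varepsilon$, which suggests that the worst case for round $t+1$ reduces to a quadratic inequality that balances the initial useful contribution from $|Q|$ against the useful-to-wasted ratio in the budget phase; $r$ is then chosen exactly to make this balance deliver the target $2k/(2+\varepsilon)$. The delicate part will be pinpointing the adversarial configuration that makes both $\varepsilon$-bad conditions tight simultaneously and verifying that even there the budget mechanism in round $t+1$ is forced to produce enough useful queries.
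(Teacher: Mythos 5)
Your structural observations are sound and match the paper's starting point: wasted queries in the bad round can only be charged to sets that get solved in that round (because the leftmost unqueried element of a set that survives is always a useful query, the optimal query set for each set being a prefix), and consequently the leftmost unqueried elements of the sets active at the start of round $t+1$ are all useful and all land in the initial seed of $Q$. But there is a genuine gap in the quantitative part, and it is not just a matter of tuning a quadratic. Knowing that $a_{t+1} > a_t(r-1)/r$ sets survive tells you nothing by itself about how many \emph{distinct} leftmost unqueried elements they have: since the sets may overlap, all surviving sets could in principle share a single leftmost unqueried element, in which case round $t+1$ would be seeded with only one useful query. Your proposal never rules this out, and the charging of wasted insertions to useful insertions \emph{within} round $t+1$ cannot fix it, because that at best controls a ratio inside round $t+1$ rather than producing the absolute count $2k/(2+\varepsilon)$.

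The missing idea is to exploit the leftover budgets of the surviving sets at the end of the bad round. The paper first converts ``more than $k/2$ wasted queries, all paid for by the $s < a/r$ solved sets'' into a lower bound $T > (r/2)(k/a)$ on the total budget accrual time, then splits the surviving sets into \emph{rich} ones (residual budget $> k/a$) and \emph{poor} ones (which must have spent at least $(r/2-1)(k/a)$, so there are at most $a/(r-2)$ of them since the useful queries they paid for number fewer than $k/2$). This leaves at least $(1-1/r)a - a/(r-2) = 2a/(2+\varepsilon)$ rich surviving sets, and---this is the step your argument has no substitute for---at most $a/k$ rich sets can share a common leftmost unqueried element, since otherwise their pooled residual budgets would exceed $1$ and they would have bought that element before the round ended. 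Dividing gives at least $2k/(2+\varepsilon)$ distinct useful elements seeded into round $t+1$. Without some argument of this kind that uses the unspent budgets to force distinctness, the claimed bound cannot be extracted.
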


\begin{proof}
Let $a$ denote the number of active sets at the start
of an $\varepsilon$-bad round.
Let $s$ be the number of sets
that are solved in the
current round; note that $s < a/r$ because the current round is $\varepsilon$-bad.
Let $T$ be the total amount by which each value $b_i$
has increased during the execution of Algorithm~\ref{fig:algA}.
If the simultaneous increase of all $b_i$ is interpreted
as time passing, then $T$ corresponds to the point in time
when the computation of the set $Q$ has been completed.
For example, if some set $S_i$ did not pay for any element
during the whole execution, then $T$ is equal to the value
of $b_i$ at the end of the execution of Algorithm~\ref{fig:algA}.

Let $Q$ be the set of queries that Algorithm~\ref{fig:algA} makes
in the current round. We claim that every wasted query in $Q$
is charged only to sets that are solved in this round.
Consider a wasted query~$e$ that is in some set~$S_j$ not solved in this round.
At the time $e$ was selected, $j$~cannot have been in $F_e$ because
otherwise $e$ would be a useful query.
Therefore, we do not charge~$e$ to~$S_j$.

The total number of wasted queries is therefore
bounded by $Ts$, as these queries are paid for
by the $s$ sets solved in this round.
As the number of wasted queries in a bad round
is larger than $k/2$, we therefore have $Ts > k/2$.
As $s < a/r$, we get $k/2 < Ta/r$, so
$T> (r/2) \cdot (k/a)$.

Call a surviving set $S_i$ {\bf rich} if
$b_i> k/a$ when the computation of $Q$ is completed.
A set that is not rich is called {\bf poor}.
Note that a poor set must have spent at least
an amount of $(r/2 - 1) \cdot (k/a) > 0$, as its total budget would
be at least $T > (r/2) \cdot (k/a)$ if it had not paid for
any queries. As the poor sets have paid for
fewer than $k/2$ elements in total (as there
are fewer than $k/2$ useful queries in the current
round), the number of poor sets is bounded by
$\frac{k/2}{(r/2 - 1) \cdot (k/a)} = a / (r - 2) > 0$. As there are more than
$(1 - 1/r) \cdot a$ surviving sets and at most $a/(r-2)$ of them
are poor, there are at
least $(1 - 1 /r) \cdot a - a/(r-2) = ((r-2)(r-1) -r) / (r(r-2)) \cdot a = 2a/(2+\varepsilon) > 0$ surviving sets that are rich.

Let $e$ be any element that is the leftmost
unqueried element (at the end
of the current round) of a rich surviving set.
If $e$ was the leftmost unqueried element of
more than $a/k$ rich surviving sets, those
sets would have been able to pay for $e$
(because their total remaining budget would
be greater than $k/a \cdot a/k=1$)
before the end of the execution of Algorithm~\ref{fig:algA},
a contradiction to $e$ not being included
in~$Q$.
Hence, the number of distinct leftmost unqueried
elements of the at least $2a/(2+\varepsilon)$ rich surviving sets
is at least $(2a/(2+\varepsilon)) / (a/k) = 2k/(2+\varepsilon)$. So the following
round will query at least $2k/(2+\varepsilon)$ elements that are the leftmost unqueried
element of an active set, and all those are useful queries that are made in the next round.
\end{proof}

\begin{theorem}
Let $\opt_k$ denote the optimal number of rounds
and $A_k$ the number of rounds made if the queries
are determined using Algorithm~\ref{fig:algA}.
Then, for every $0 < \varepsilon < 1$, $A_k \le (2+\varepsilon) \cdot \opt_k + \Oh\left(\frac{1}{\varepsilon} \cdot \lg m\right)$.
\end{theorem}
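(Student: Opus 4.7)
The plan is to classify each round of Algorithm~\ref{fig:algA} and count each type separately. Call a round \emph{productive} if it contains at least $k/2$ useful queries, \emph{solving-only} if it is $\varepsilon$-good but not productive (so it solves at least a $1/r$ fraction of its currently active sets), and \emph{bad} if it is $\varepsilon$-bad. Let $P$, $S$, $B$ denote the respective counts, so that $A_k = P + S + B$, and I would bound each term separately.

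First I would bound $S$. Every round, of any type, can only decrease the number of active sets, and each solving-only round decreases it by a factor of at least $1 - 1/r$. Hence after $S$ solving-only rounds at most $m(1 - 1/r)^S$ sets remain active, forcing $S = \Oh(r \lg m)$. Since $r = \Oh(1/\varepsilon)$, this already absorbs the additive $\Oh\!\left(\frac{1}{\varepsilon} \lg m\right)$ term of the claimed bound.

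Next I would relate $B$ to $P$ using Lemma~\ref{lem:badgood}. Because $\varepsilon < 1$ gives $2k/(2+\varepsilon) > k/2$, the round following any bad round is itself productive, so bad rounds are never consecutive. Partition the productive rounds into those that immediately follow a bad round (count $P_1$, each guaranteed at least $2k/(2+\varepsilon)$ useful queries by the lemma) and the remaining productive rounds (count $P_2$, each with at least $k/2$ useful queries by definition of productive). Mapping each bad round to its productive successor is injective, so $B \leq P_1 + 1$, where the $+1$ accounts for the possibility that the final round is bad and hence has no successor.

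Finally I would close the argument through the useful-query budget: every useful query lies in $\OPT_1$, so the total number of useful queries across all rounds is at most $\opt_1 \leq k \cdot \opt_k$. Summing the contributions of the two productive types gives the single linear inequality $\frac{2}{2+\varepsilon} P_1 + \frac{1}{2} P_2 \leq \opt_k$, which in particular implies $P_1 \leq (2+\varepsilon)\opt_k / 2$. Substituting these bounds into $A_k \leq 2P_1 + P_2 + S + 1$ (which follows from $B \leq P_1 + 1$) yields $A_k \leq (2+\varepsilon)\opt_k + \Oh\!\left(\frac{1}{\varepsilon} \lg m\right)$ after a short calculation. I expect the main obstacle to be precisely this last balancing step: the stronger productivity of post-bad rounds must be traded against the weaker guarantee of ordinary productive rounds in exactly the right proportion to recover the factor $2 + \varepsilon$, rather than a weaker factor like $4$ that a naive accounting (e.g.\ one that ignores the distinction between $P_1$ and $P_2$) would produce.
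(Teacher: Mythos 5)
Your proof is correct and follows essentially the same approach as the paper: bound the rounds that solve a $1/r$ fraction of active sets by $\Oh\left(\frac{1}{\varepsilon}\lg m\right)$, use Lemma~\ref{lem:badgood} to pair each bad round with a productive successor, and charge the remaining rounds against the budget $\opt_1 \le k\cdot\opt_k$ of useful queries. The paper phrases the last step as an averaging argument ("at least $k/(2+\varepsilon)$ useful queries on average"), whereas you make the same amortization explicit via the $P_1/P_2$ split and the linear inequality $\frac{2}{2+\varepsilon}P_1+\frac{1}{2}P_2\le\opt_k$; this is just a more careful writing of the identical idea.
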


\begin{proof}
In every round, one of the following must hold:
\begin{itemize}
\item The algorithm makes at least $k/2$ useful queries.
\item The algorithm solves at least a fraction of $1/r$ of the active sets.
\item If none of the above hold, the algorithm makes at least $2k/(2+\varepsilon)$ useful queries in the following round (by Lemma~\ref{lem:badgood}).
\end{itemize}
The number of rounds in which the algorithm solves at least a fraction of $1/r$ of the active
sets is bounded by $\lceil \log_{r/(r-1)} m\rceil = \Oh\left(\frac{1}{\varepsilon} \cdot \lg m\right)$, since $1/\left(\lg \frac{r}{r-1}\right) < 5/\varepsilon$ for $0 < \varepsilon < 1$. In every round where the algorithm
does not solve at least a fraction of $1/r$ of the active sets, the algorithm makes
at least $k/(2+\varepsilon)$ useful queries on average (if in any such round it makes fewer than $k/2$ useful queries,
it makes $2k/(2+\varepsilon)$ useful queries in the following round). The number of such rounds is therefore bounded
by $(2 + \varepsilon) \cdot \opt_k$.
\end{proof}

We do not know if this analysis is tight, so it would be worth investigating this question.

\subsection{The Minimum Problem with Disjoint Sets}

We now consider the case where $k \geq 2$ and the $m$~sets in the given family $\mathcal{S}$
are pairwise disjoint.
For this case, it turns out that the balanced algorithm achieves good upper bounds.

\begin{theorem}
$\BAL_k \le \opt_k + \Oh(\lg \min\{k, m\})$.
\label{th:ballog}
\end{theorem}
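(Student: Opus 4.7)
The strategy is to bound the total number of wasted queries made by $\BAL$ and then invoke Proposition~\ref{prop:wasted}. Since the $m$ sets are pairwise disjoint and the optimum for each $S_i$ is a prefix of length $o_i$ of the intervals of $S_i$ sorted by left endpoint, a query in $S_i$ is wasted precisely when it falls at position greater than $o_i$, so waste decomposes naturally per set.

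First I would establish, by induction on the round, a \emph{balance invariant}: at the start of every round, the prefix lengths of all active sets lie in $\{p,p+1\}$ for some $p\ge 0$. For the induction step, fix round $t$ with $a_t$ active sets split as $L$ at the lower prefix $p$ and $H=a_t-L$ at $p+1$; a short case analysis of how the greedy min-prefix rule distributes the $k$ queries (first driving the low sets up to $p+1$, then advancing sets level by level) confirms that the end-of-round prefixes of the start-of-round active sets still differ by at most $1$, and the surviving actives inherit this bound. A direct corollary of this water-filling behavior is that in round $t$ no active set receives more than $\lceil k/a_t\rceil$ queries.

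With this cap in hand I would account for wasted queries round by round. A set $S_i$ solved in round $t$ received at least one useful query that round (since $p_i^{t-1}<o_i$), so its contribution to the round's waste is at most $\lceil k/a_t\rceil-1\le k/a_t$. Moreover, in every round with $a_t\ge k$ each queried active set receives just one query, which is necessarily useful, so no waste occurs. Waste can therefore only accrue once the number of active sets drops below $k$. Letting $t_0$ be the first such round (so $a_{t_0}\le\min\{k-1,m\}$) and writing $R_t=a_t-a_{t+1}$ for the number of sets solved in round $t$, the total waste $w$ satisfies
\[
w \;\le\; k\sum_{t\ge t_0}\frac{a_t-a_{t+1}}{a_t} \;\le\; k\sum_{j=1}^{a_{t_0}}\frac{1}{j} \;=\; k\cdot H_{a_{t_0}} \;=\; \Oh(k\lg\min\{k,m\}),
\]
where the middle inequality is a standard harmonic-series telescoping: each integer $j\in\{1,\ldots,a_{t_0}\}$ lies in exactly one of the disjoint intervals $(a_{t+1},a_t]$ and satisfies $1/j\ge 1/a_t$ on that interval. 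Applying Proposition~\ref{prop:wasted} yields $\BAL_k \le \opt_k+\lceil w/k\rceil = \opt_k+\Oh(\lg\min\{k,m\})$.

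The main technical step will be justifying the per-round cap $\lceil k/a_t\rceil$, which relies on the balance invariant together with a consistent tie-breaking rule in the greedy min-prefix selection (so that a set already queried in the current round is not preferred over an untouched one); once that is in place, the harmonic-series estimate and the invocation of Proposition~\ref{prop:wasted} are routine accounting.
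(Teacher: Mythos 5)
Your proof is correct and follows essentially the same route as the paper's: waste accrues only in the round a set is solved, is capped at $\lceil k/a_t\rceil - 1 \le k/a_t$ per solved set, and sums to $k\cdot H(\min\{k,m\})$ via the harmonic series, after which Proposition~\ref{prop:wasted} finishes; your round-by-round telescoping is just a regrouping of the paper's per-set sum $\sum_i k/(m-i+1)$ over the order in which sets are solved. One small simplification: the cross-round balance invariant is unnecessary (and, read as a statement about cumulative query counts, can even fail by more than one), since the prefix length of every active set resets to $0$ at the start of each round --- it is measured relative to the current round's $Q$ --- so the per-round cap $\lceil k/a_t\rceil$ follows from the within-round water-filling alone.
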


\begin{proof}
First we prove the bound for $m \leq k$.
Index the sets in such a way that
$S_i$ is the $i$-th set that is solved by $\BAL$, for $1 \le i \le m$.
Sets that are solved in the same round are ordered by non-decreasing
number of queries made in them in that round by $\BAL$.
In the round when~$S_i$ is solved, there are at least $m-(i-1)$ active sets,
so the number of wasted queries in~$S_i$ is at most $\frac{k}{m-(i-1)}$.
($\BAL$ makes at most $\left\lceil \frac{k}{m-(i-1)} \right\rceil$ queries in $S_i$,
and at least one of these is not wasted.)
The total number of wasted queries is then at most
 $\sum_{i=1}^{m} \frac{k}{m-(i-1)} = \sum_{i=1}^{m} k/i = k \cdot H(m)$,
where~$H(m)$ denotes the $m$-th Harmonic number.
By Proposition~\ref{prop:wasted}, $\BAL_k\le \opt_k + \Oh(\lg m)$.

If $m > k$, observe that the algorithm does not waste any queries until the number of active sets is at most~$k$.
From that point on, it wastes at most $k\cdot H(k)$ queries
following the arguments in the previous paragraph, so the number
of rounds is bounded by $\opt_k + \Oh(\log k)$.
\end{proof}

We now give a more refined analysis that
provides a better bound for $\opt_k=1$, as well as a better
multiplicative bound than what would follow from
Theorem~\ref{th:ballog}.

\begin{lemma}
If $\opt_k = 1$, then $\BAL_k \le \Oh(\lg m / \lg\lg m)$.
\label{lem:opt1}
\end{lemma}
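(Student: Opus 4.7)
The plan is to exploit the budget constraint $\sum_i p_i \le k$ (which follows from $\opt_k = 1$) together with the balanced round-robin nature of $\BAL$ to set up a recurrence on the number $a_r$ of sets still active at the start of round $r$, and then conclude via an AM-GM inequality. Write $p_i = |\OPT_1 \cap S_i|$ for the length of the optimal prefix of the $i$-th set (recall that an optimal solution queries a prefix of each set), so that $\sum_i p_i = \opt_1 \le k$ and $p_i \ge 1$ for all $i$. Let $R = \BAL_k$, $a_1 = m$, $a_{R+1} = 0$, and define $P_r := \sum_{j=1}^{r} \lfloor k/a_j \rfloor$.

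The first key step is a structural claim: any set still active at the end of round $r$ has received at least $P_r$ queries from $\BAL$. The reason is that $\BAL$ always queries an active set with the smallest current query count, so in round $j$ every set that remains active throughout the whole round is queried at least its ``fair share'' of $\lfloor k/a_j \rfloor$ times. Consequently, any set solved exactly in round $r$ has $p_i \ge P_{r-1} + 1$.

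The second step aggregates these lower bounds. Grouping $\sum_i p_i$ by the round in which each set is solved and applying Abel summation (for clarity, assume $k/a_j$ is integral; the floors only cost a constant factor in the final bound), one obtains
\[
k \;\ge\; \sum_i p_i \;\ge\; \sum_{r=1}^R (a_r - a_{r+1})(P_{r-1} + 1) \;=\; k \sum_{r=2}^R \frac{a_r}{a_{r-1}} + m,
\]
whence $\sum_{r=2}^R a_r/a_{r-1} \le 1$. Setting $\gamma_r = a_r/a_{r-1}$ and using $\prod_{r=2}^R \gamma_r = a_R/m \ge 1/m$, the AM-GM inequality gives
\[
\frac{1}{R-1} \;\ge\; \frac{1}{R-1}\sum_{r=2}^R \gamma_r \;\ge\; \Bigl(\prod_{r=2}^R \gamma_r\Bigr)^{1/(R-1)} \;\ge\; m^{-1/(R-1)},
\]
which rearranges to $(R-1)\ln(R-1) \le \ln m$, i.e., $W(R-1) \le \ln m$. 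Hence $R \le 1 + W^{-1}(\ln m) = \Oh(\lg m / \lg\lg m)$, using the estimate $W^{-1}(x) = \Theta(x/\lg x)$.

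The main obstacle will be rigorously justifying the structural claim above, because sets can become inactive partway through a round and then the distribution of queries in that round is no longer a pure round-robin. My intended resolution is that whenever a set is solved mid-round and $\BAL$ stops querying it, the remaining query slots in that round are reassigned to other sets that are still active, so every set that survives the whole of round $j$ receives at least $\lfloor k/a_j \rfloor$ queries in that round; this can be verified by induction on the individual queries within the round. A minor secondary point is to check that the floor in $\lfloor k/a_j \rfloor$ does not spoil the asymptotics, which it does not because $\lfloor x \rfloor \ge x/2$ for $x \ge 1$ and $k/a_j \ge 1$ throughout.
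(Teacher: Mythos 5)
Your proof is correct and follows essentially the same route as the paper's: both arguments reduce to the pair of constraints $\sum_r a_{r+1}/a_r \le 1$ and $\prod_r a_{r+1}/a_r \ge 1/m$ and then optimize (your AM--GM step is the paper's ``set all ratios equal to $1/R$'' step). The only difference is bookkeeping---you aggregate useful queries set-by-set and recover the per-round sum via Abel summation, whereas the paper counts the at least $k\,a_{r+1}/a_r$ useful queries in each round directly---and your worry about sets becoming inactive mid-round is moot, since all $k$ queries of a round are issued as one batch, so the collection of active sets changes only between rounds (a set receiving fewer than $\lfloor k/a_j\rfloor$ queries in round $j$ must have exhausted its non-trivial elements and hence does not survive).
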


\begin{proof}
Consider an arbitrary instance of the problem with $\opt_k = 1$. Let $R+1$ be the
number of rounds needed by the algorithm.
For each of the first $R$ rounds, we consider the fraction $b_i$
of active sets that are not solved in that round. More formally,
for the $i$-th round, for $1\le i\le R$, if $a_i$ denotes the number
of active sets at the start of round $i$ and $a_{i+1}$ the number
of active sets at the end of round $i$, then we define $b_i=a_{i+1}/a_i$.

Consider round~$i$, $1\le i\le R$.
A set that is active at the start of round~$i$ and is still active at the start of the
round~$i+1$ is called a {\bf surviving set}. A set that is active at the start
of round~$i$ and gets solved by the queries made in round~$i$ is called a {\bf solved
set}.
For each surviving set, all queries made in that set in round~$i$ are useful. For each
solved set, at least one query made in that set is useful.
We claim that this implies the algorithm makes at least~$k b_i$ useful queries
in round~$i$. To see this, observe that if the algorithm makes $\lfloor k/a_i\rfloor$
queries in a surviving set
and $\lceil k/a_i \rceil$ queries in a solved set, we can conceptually move one useful
query from the solved set to the surviving set. After this,
the~$a_{i+1}$ surviving sets contain at least $k/a_i$
useful queries on average, and hence $a_{i+1} \cdot k/a_i =b_i k$ useful queries in total.

As~$\OPT_1$ must make all useful queries and makes at most~$k$ queries in total,
we have that
$\sum_{i=1}^R kb_i \leq \opt_1 \leq k$, so $\sum_{i=1}^R b_i \le 1$.
Furthermore, as there are~$m$ active sets initially
and there is still at least one active set after round $R$, we have that
$\prod_{i=1}^R b_i = a_{R+1}/a_1 \ge 1/m$.
To get an upper bound on~$R$, we need to determine the largest possible value of $R$
for which there
exist values $b_i>0$ for $1\le i \le R$ satisfying $\sum_{i=1}^R b_i \le 1$ and $\prod_{i=1}^R b_i \ge 1/m$.
We gain nothing from choosing~$b_i$ with $\sum_{i=1}^R b_i < 1$, so we can
assume $\sum_{i=1}^R b_i =1$. In that case, the value of $\prod_{i=1}^R b_i$
is maximized if we set all~$b_i$ equal, namely $b_i= 1 / R$.
So we need to determine the largest value of $R$ that satisfies
$
\prod_{i=1}^R 1/R \ge 1/m
$,
or equivalently $R^R \le m$, or
$R \lg R \le \lg m$.
This shows that $R \le W^{-1}(\lg m)=\Oh(\lg m / \lg\lg m)$.
\end{proof}

\begin{corollary}
If $\opt_k = 1$, then $\BAL_k \le \Oh(\lg k / \lg\lg k)$.
\label{cor:W-1logk}
\end{corollary}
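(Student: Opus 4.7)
The plan is to derive this corollary as an immediate consequence of Lemma~\ref{lem:opt1} after bounding $m$ in terms of $k$. The key observation is that $\opt_k = 1$ means the optimum solves the whole instance with a single round of at most $k$ queries, so $\opt_1 \le k$. Combined with the standing w.l.o.g.\ assumption in this section that the optimum must make at least one query in each set, this gives $m \le \opt_1 \le k$.

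Given this bound, I would simply invoke Lemma~\ref{lem:opt1}: since the hypothesis $\opt_k = 1$ is the same, we already know $\BAL_k \le \Oh(\lg m / \lg\lg m)$. It then remains to observe that the function $x \mapsto \lg x / \lg\lg x$ is monotonically non-decreasing for $x$ sufficiently large (say $x \ge 16$), so substituting $m \le k$ yields $\BAL_k \le \Oh(\lg k / \lg\lg k)$. For the handful of small values of $m$ where monotonicity or even the expression $\lg\lg m$ is not convenient to state directly, $\BAL_k$ is clearly $\Oh(1)$, which is absorbed into the big-$\Oh$ bound; alternatively one can phrase the conclusion in terms of $W^{-1}(\lg m) \le W^{-1}(\lg k)$ using the monotonicity of $W^{-1}$, which is cleaner because $W^{-1}$ is defined and increasing on all of $[0,\infty)$.

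There is essentially no obstacle here beyond noticing the bound $m \le k$. The only mildly subtle point is remembering that the standing assumption from the start of Section~\ref{sec:minimum} guarantees every set contributes at least one query to $\OPT_1$; without this assumption (for instance, if sets with trivial answers were allowed in the count of $m$), one would need a different argument. Since Lemma~\ref{lem:opt1} already does all the work, the corollary reduces to a one-line substitution and the proof should be kept to two or three sentences.
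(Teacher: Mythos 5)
Your proof is correct and follows essentially the same route as the paper's: both reduce to Lemma~\ref{lem:opt1} by observing that, since the (disjoint) sets each require at least one query and $\opt_1 \le k$ when $\opt_k = 1$, only at most $k$ sets matter, so the bound $\Oh(\lg m/\lg\lg m)$ can be replaced by $\Oh(\lg k/\lg\lg k)$. The paper phrases this as a two-case argument ($k \ge m$ versus $k < m$, applying the lemma with $m = k$ in the latter case) rather than invoking monotonicity explicitly, but this is only a cosmetic difference.
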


\begin{proof}
If $k\ge m$, then the corollary follows from Lemma~\ref{lem:opt1}.
If $k<m$, there can be at most~$k$ active sets, because the optimum performs
at most~$k$ queries since $\opt_k = 1$. Hence, we only need to consider
these~$k$ sets and can apply Lemma~\ref{lem:opt1} with $m=k$.
\end{proof}

Now we wish to extend these bounds to arbitrary~$\opt_k$.
It turns out that we can reduce the analysis for an instance with arbitrary~$\opt_k$ to the analysis for an instance with $\opt_k = 1$, assuming that $\BAL$ is implemented in a round-robin fashion.
A formal description of such an implementation is as follows:
fix an arbitrary order of the~$m$ sets of the
original problem instance as $S_1,S_2,\ldots,S_m$, and consider
it as a cyclic order where the set after $S_m$ is~$S_1$.
In each round, $\BAL$ distributes the $k$ queries to the
active sets as follows. Let $i$ be the index of the set
to which the last query was distributed in the previous
round (or let $i=m$ if we are in the first round). Then
initialize $Q=\emptyset$ and
repeat the following step~$k$ times.
Let~$j$ be the first index
after~$i$ such that~$S_j$ is active and has unqueried non-trivial
elements that are not in $Q$;
pick the leftmost unqueried non-trivial
element in $S_j\setminus Q$, insert it into $Q$, and set $i=j$.
The resulting set $Q$ is then queried.

\begin{lemma}
Assume that $\BAL$ distributes queries to active sets in a
round-robin fashion.
If $\BAL_k \le \rho$ for instances with $\opt_k = 1$, with $\rho$
independent of~$k$,
then $\BAL$ is $\rho$-round-competitive for arbitrary instances.
\label{lem:tto1}
\end{lemma}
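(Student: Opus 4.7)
The plan is to generalise the analysis of Lemma~\ref{lem:opt1} from $\opt_k=1$ to arbitrary $T=\opt_k(I)$: the per-round accounting of useful queries carries over verbatim, while the global budget loosens from $k$ to $kT$. Write $R=\BAL_k(I)$, let $a_r$ be the number of active sets at the start of round $r$, and set $b_r=a_{r+1}/a_r$ for $r=1,\ldots,R-1$; this is well defined with $b_r>0$ because $a_{r+1}\ge 1$ whenever round $r+1$ exists.

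The first step is to verify that the round-local estimate of Lemma~\ref{lem:opt1}, namely that BAL makes at least $k b_r$ useful queries in round $r$, still applies. Every query in a surviving set is useful because $r_j<p_j$ throughout the round, every solved set contributes at least one useful query (its $p_j$-th), and conceptually moving one useful query from each solved set to a surviving set raises the average number of useful queries per surviving set to at least $k/a_r$. This reasoning is round-local and never uses $\opt_k=1$. Summing over the first $R-1$ rounds and using that all useful queries lie in $\OPT_1$ with $|\OPT_1|\le k\cdot\opt_k=kT$ gives $\sum_{r=1}^{R-1} b_r\le T$, while the telescoping product yields $\prod_{r=1}^{R-1} b_r=a_R/a_1\ge 1/m$. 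By AM--GM the largest $R-1$ compatible with these two constraints is attained at the equal choice $b_r=T/(R-1)$, so after taking logarithms $(R-1)/T\le W^{-1}((\lg m)/T)\le W^{-1}(\lg m)$ by monotonicity of $W^{-1}$.

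Combining with the hypothesis finishes the proof: the same analysis applied at $T=1$ shows $\BAL_k\le W^{-1}(\lg m)+1$ on every $\opt_k=1$ instance, so any $\rho$ satisfying the hypothesis dominates this quantity, and hence $W^{-1}(\lg m)\le\rho-1$. Substituting yields $R-1\le T(\rho-1)$, so $R\le T\rho-T+1\le T\rho=\rho\cdot\opt_k(I)$ for $T\ge 1$, which is the claimed $\rho$-round-competitiveness. The main obstacle I anticipate is the first step: checking rigorously that the per-round useful-query bound of Lemma~\ref{lem:opt1} really is indifferent to the presence of sets whose optimal queries OPT spreads across several rounds. The argument relies only on round-local conditions ($r_j<p_j$ throughout for surviving sets; $r_j$ reaching $p_j$ within the round for solved sets), both of which are independent of OPT's global scheduling, but this independence deserves an explicit justification.
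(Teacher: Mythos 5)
Your proposal does not actually prove the lemma as stated; it proves a different (and more specific) statement. The lemma is a black-box reduction: for \emph{any} bound $\rho$ that is valid on $\opt_k=1$ instances, $\BAL$ is $\rho$-round-competitive in general. Your argument instead re-runs the budget analysis of Lemma~\ref{lem:opt1} with the relaxed constraint $\sum_r b_r \le T$ and obtains the concrete bound $R-1 \le T\cdot W^{-1}(\lg m)$. To connect this to the hypothesis you then assert that ``any $\rho$ satisfying the hypothesis dominates $W^{-1}(\lg m)+1$, hence $W^{-1}(\lg m)\le \rho-1$.'' This inference is backwards: the hypothesis is an \emph{upper} bound on $\BAL_k$ for $\opt_k=1$ instances, and the fact that your analysis yields \emph{another} upper bound $W^{-1}(\lg m)+1$ implies nothing about which of the two is smaller. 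If someone supplied a valid $\rho$ strictly below $W^{-1}(\lg m)+1$ (say, via a sharper analysis of $\BAL$ on $\opt_k=1$ instances), the lemma must still deliver $\rho$-round-competitiveness, but your chain $R-1\le T\cdot W^{-1}(\lg m)\le T(\rho-1)$ breaks. Deducing $\rho\ge W^{-1}(\lg m)+1$ would require a matching \emph{lower} bound instance for every $m$ and every $k$, which you do not have (Theorem~\ref{th:lbMcor} only provides such instances for special values of $m$ and $k$, and in any case that is not how the lemma is meant to be used).

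The paper's proof is structurally different and avoids this issue entirely: given an instance $L$ with $\opt_k(L)=t$, it considers the \emph{same} instance with round size $k'=tk$, for which $\opt_{k'}=1$, and couples the two executions of the round-robin $\BAL$. The key point is that after $xt$ rounds at size $k$, $\BAL$ has distributed at least as many queries to each active set as $\BAL$ at size $tk$ has after $x$ rounds (it can only ``move ahead'' in the cyclic order, never fall behind), so $t$ small rounds dominate one large round and $\BAL_k\le \rho t$ follows for whatever $\rho$ the hypothesis provides. This is where the assumption that $\rho$ is independent of $k$ is used: the bound for the auxiliary instance at round size $tk$ must be the same $\rho$. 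Your direct computation, if its edge cases ($R-1<T$, the undefined quantities $r_j,p_j$, small $m$) were tidied up, would give an alternative self-contained proof of Corollary~\ref{cor:W-1logccomp}, but it cannot substitute for Lemma~\ref{lem:tto1} itself.
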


\begin{proof}
Let $L=(\mathcal{I},\mathcal{S})$ be an instance with $\opt_k(L) = t$.
Note that $\opt_1(L) \le tk$.
Consider the instance~$L'$ which is identical to~$L$
except that the number of queries per round is $k' = tk$.
Use $\BAL'$ to refer to the solution computed by $\BAL$ for the instance~$L'$
(and also to the algorithm $\BAL$ when it is executed on instance~$L'$).
Note that $\opt_{k'}(L') = 1$ as $\opt_1(L') = \opt_1(L)$.
By our assumption, $\BAL'_{k'} \le \rho$.
We claim that this implies $\BAL_k \le \rho t$.

To establish the claim, we compare the situation when
$\BAL'$ has executed~$x$ rounds on~$L'$ with the situation when $\BAL$
has executed $x t$ rounds on~$L$. We claim that the following
two invariants hold for every~$x$:
\begin{description}
\item[(1)] The number of remaining active sets of $\BAL$ is at most that of $\BAL'$.
\item[(2)] $\BAL$ has made at least as many queries in each active set as $\BAL'$.
\end{description}
For a proof of these invariants, note that $\BAL'$ and $\BAL$ distribute
queries to sets in the same round-robin order, the only difference being
that $\BAL$ performs a round of queries whenever $k$ queries have been distributed,
while $\BAL'$ only performs a round of queries whenever $kt$ queries have
accumulated. Imagine the process by which both algorithms pick queries
as if it was executed in parallel, with both of the algorithms choosing
one query in each step.
The only case where $\BAL$ and $\BAL'$ can distribute
the next query to a different set is when $\BAL'$ distributes the next query
to a set $S_i$ that is no longer active for $\BAL$ (or all of whose non-trivial
unqueried elements have already been added by $\BAL$ to the set of queries to be performed
in the next round). This can happen
because $\BAL$ may have already made some of the queries that $\BAL'$ has distributed to
sets but not yet performed. If this happens, $\BAL$ will select for the next query
an element of a set that comes after $S_i$ in the cyclical order, so it will
move ahead of $\BAL'$ (i.e., it chooses a query now that $\BAL'$ will only
choose in a later step). Hence, at any step during this process,
$\BAL$ either picks the same next query as $\BAL'$ or is ahead of $\BAL'$.
This shows that if the invariants hold when $\BAL$ and $\BAL'$ have executed
$xt$ and $x$ rounds, respectively, then they also hold after they have
executed $(x+1)t$ and $x+1$ rounds, respectively. As the invariants clearly
hold for $x=0$, if follows that they always hold, and hence $\BAL_k\le \rho t$.
\end{proof}

Lemmas~\ref{lem:opt1} and~\ref{lem:tto1} imply the following.

\begin{corollary}
\label{cor:W-1logccomp}
$\BAL$ is $\Oh(\lg m / \lg\lg m)$-round-competitive.
\end{corollary}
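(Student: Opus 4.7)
The plan is to chain the two preceding lemmas together. By Lemma~\ref{lem:opt1}, any instance for which $\opt_k = 1$ is solved by $\BAL$ in at most $\rho = \Oh(\lg m / \lg\lg m)$ rounds. The first thing I would verify is that this bound $\rho$ depends only on $m$ (the number of sets in the family) and not on $k$; this is crucial because Lemma~\ref{lem:tto1} requires $\rho$ to be independent of the number of queries per round. Since the expression $\lg m / \lg\lg m$ has no occurrence of $k$, this condition is immediate.

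With the independence condition confirmed, I would then invoke Lemma~\ref{lem:tto1} directly: it promotes any bound of the form ``$\BAL_k \le \rho$ whenever $\opt_k = 1$'', with $\rho$ independent of $k$, into the statement that $\BAL$ is $\rho$-round-competitive on arbitrary instances. Substituting $\rho = \Oh(\lg m / \lg\lg m)$ yields the claimed competitive ratio.

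The only minor subtlety worth a sentence in the write-up is that Lemma~\ref{lem:tto1} assumes $\BAL$ is implemented in the round-robin fashion described just before its statement; I would note that we adopt this implementation, so the hypothesis of the lemma is met. There is no real obstacle here, as the proof is a direct composition of the two lemmas with no additional calculation needed.
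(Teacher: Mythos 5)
Your proposal is correct and matches the paper exactly: the corollary is stated as an immediate consequence of Lemma~\ref{lem:opt1} and Lemma~\ref{lem:tto1}, and your composition of the two (including the check that the bound $\Oh(\lg m/\lg\lg m)$ is independent of $k$ and the remark about the round-robin implementation) is precisely the intended argument.
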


Unfortunately, Corollary \ref{cor:W-1logk} cannot be
combined with Lemma~\ref{lem:tto1} directly to show that $\BAL$ is
$\Oh(\lg k / \lg\lg k)$-round-competitive, because the proof of Lemma~\ref{lem:tto1}
assumes that~$\rho$ is not a function of~$k$.
However, we can show the claim using different arguments.

\begin{lemma}
\label{lem:W-1logkcomp}
$\BAL$ is $\Oh(\lg k / \lg\lg k)$-round-competitive.
\end{lemma}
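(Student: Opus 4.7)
The plan is to reuse the reduction from Lemma~\ref{lem:tto1} but carry along the round-width dependence of the bound from Corollary~\ref{cor:W-1logk}, and to split into two cases based on the size of $\opt_k$ relative to $k$ so that the extra factor introduced by the reduction can be absorbed.

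First I would dispose of the easy case $\opt_k \ge k$: Theorem~\ref{th:ballog} gives $\BAL_k \le \opt_k + \Oh(\lg k)$, and since $\lg k \le k \le \opt_k$, this is already $\Oh(\opt_k)$, which is bounded by $\Oh(\lg k/\lg\lg k) \cdot \opt_k$ for $k$ bounded away from small constants.

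For the remaining case $t := \opt_k < k$, I would replay the proof of Lemma~\ref{lem:tto1} almost verbatim: define $L'$ as the given instance but with $k' := kt$ queries per round, observe that $\opt_{k'}(L') = 1$ since $\opt_1(L') = \opt_1(L) \le tk = k'$, and invoke the two invariants of that proof. These invariants only compare the sets of remaining active sets after $xt$ rounds of $\BAL$ on $L$ with those after $x$ rounds of $\BAL'$ on $L'$, and do not use the hypothesis that $\rho$ is independent of $k$. This again gives $\BAL_k \le \BAL'_{k'} \cdot t$, and Corollary~\ref{cor:W-1logk} applied to $L'$ yields $\BAL'_{k'} \le \Oh(\lg k'/\lg\lg k')$. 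Since $t \le k$ we have $k \le k' \le k^2$, so $\lg k' \le 2\lg k$ while $\lg\lg k' \ge \lg\lg k$, and therefore $\lg k'/\lg\lg k' = \Oh(\lg k/\lg\lg k)$, yielding $\BAL_k \le \Oh(\lg k/\lg\lg k) \cdot \opt_k$ in this case as well.

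The only real subtlety, and the reason Lemma~\ref{lem:tto1} explicitly required its hypothesis that $\rho$ be independent of $k$, is the inflation of the round width from $k$ to $k' = k\opt_k$ in the reduction: one must ensure that $\lg k'/\lg\lg k'$ does not exceed $\lg k/\lg\lg k$ by more than a constant factor. The case split on whether $\opt_k$ exceeds $k$ controls this inflation precisely, since when $\opt_k \le k$ we have $k' \le k^2$ and the logarithmic ratio is preserved up to a constant, while when $\opt_k > k$ the additive $\Oh(\lg k)$ of Theorem~\ref{th:ballog} is already dominated by $\opt_k$.
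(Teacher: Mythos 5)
Your proof is correct, but it takes a genuinely different route from the paper's. The paper splits on $k$ versus $m$: when $k<m$ it lets $\BAL$ run for $R_0$ rounds until at most $k$ sets remain active, observes that every query in those rounds is useful (at most one query per active set per round), and then applies the $m$-based bound of Corollary~\ref{cor:W-1logccomp} to the residual instance, which has at most $k$ sets; the additive decomposition $\opt_k(L)=R_0+\lceil\opt_1(L')/k\rceil$ finishes the argument. You instead split on $\opt_k$ versus $k$ and, in the main case $\opt_k<k$, reopen the scaling argument of Lemma~\ref{lem:tto1}: you correctly observe that its invariants yield $\BAL_k(L)\le \BAL'_{k'}(L')\cdot t$ unconditionally, that the hypothesis ``$\rho$ independent of $k$'' is needed only to reinterpret $\rho(k')$ as $\rho(k)$ in the final competitive ratio, and that the restriction $t=\opt_k< k$ forces $k'=kt< k^2$, so $\lg k'/\lg\lg k'=\Oh(\lg k/\lg\lg k)$ and the inflation is absorbed; the complementary case $\opt_k\ge k$ is handled by the additive bound of Theorem~\ref{th:ballog}, which is then already $\Oh(\opt_k)$. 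Both arguments are sound and of comparable length. The paper's version keeps Lemma~\ref{lem:tto1} and Corollary~\ref{cor:W-1logccomp} as black boxes and stays within the ``reduce to few sets'' viewpoint; yours is arguably more illuminating about \emph{why} the independence hypothesis in Lemma~\ref{lem:tto1} can be relaxed, namely for any $\rho(k)$ growing slowly enough that $\rho(k^2)=\Oh(\rho(k))$, which would generalize to other sub-polynomial bounds, at the cost of having to re-examine the internals of that lemma's proof rather than cite its statement.
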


\begin{proof}
If $k \ge m$, the lemma follows from Corollary~\ref{cor:W-1logccomp}.

If $k<m$, let~$L$ be the given instance and let~$R_0$ be the number of rounds the algorithm needs
until the number of active sets falls below $k+1$ for the first time.
As the algorithm makes at most one query in each active set in the
first~$R_0$ rounds, all queries made in the first~$R_0$ rounds are
useful. Let~$L'$ be the instance at the end of round~$R_0$.
As~$L'$ has at most~$k$ active sets, $\BAL$ is
$\Oh(\lg k / \lg\lg k)$-round-competitive on~$L'$ by Corollary~\ref{cor:W-1logccomp},
and it needs at most $\Oh(\lg k / \lg\lg k) \cdot \opt_k(L')
= \Oh(\lg k / \lg\lg k) \cdot \lceil \opt_1(L')/k \rceil$ rounds
to solve~$L'$.

We have that $\opt_1(L)=k \cdot R_0 + \opt_1(L')$, and hence $\opt_k(L)=R_0 + \lceil \opt_1(L')/k \rceil$.
Thus,
\begin{eqnarray*}
\BAL_k(L) & \le &  R_0 + \Oh(\lg k / \lg\lg k) \cdot \lceil \opt_1(L')/k \rceil\\
& \le & \Oh(\lg k / \lg\lg k) \cdot (R_0 + \lceil \opt_1(L')/k \rceil)\\
& = & \Oh(\lg k / \lg\lg k) \cdot \opt_k(L),
\end{eqnarray*}
and the claim follows.
\end{proof}

The following theorem then follows from Corollary~\ref{cor:W-1logccomp} and Lemma~\ref{lem:W-1logkcomp}.

\begin{theorem}
 $\BAL$ is $\Oh(\lg \min\{k, m\} / \lg\lg \min\{k, m\})$-round-competitive.
\label{teo:W-1logkm}
\end{theorem}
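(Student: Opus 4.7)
The plan is essentially to observe that the result follows by combining the two bounds already proved: Corollary~\ref{cor:W-1logccomp} gives $\BAL_k \le c_1 \cdot (\lg m / \lg\lg m) \cdot \opt_k$, and Lemma~\ref{lem:W-1logkcomp} gives $\BAL_k \le c_2 \cdot (\lg k / \lg\lg k) \cdot \opt_k$, for absolute constants $c_1, c_2 > 0$. Since both bounds are valid on every instance, the algorithm's round count is bounded by the pointwise minimum, which is $\Oh\!\bigl(\min\{\lg m / \lg\lg m,\ \lg k / \lg\lg k\}\bigr) \cdot \opt_k$.

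The only thing to verify is that the minimum of the two ratios equals, up to constants, the value of the function $f(x) = \lg x / \lg\lg x$ at $x = \min\{k,m\}$. The key fact is that $f$ is monotone increasing on $[c_0,\infty)$ for some small constant $c_0$ (since its derivative is positive once $\lg\lg x > 1$), so $\min\{f(k), f(m)\} = f(\min\{k,m\})$ whenever $\min\{k,m\}$ is above that threshold. For the finitely many cases where $\min\{k,m\}$ is a small constant, the bound holds trivially by absorbing the constant into the $\Oh$, noting that $f$ (when extended to be constant near the boundary) is still $\Theta(\lg x / \lg\lg x)$.

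Thus the proof is a two-line deduction: apply both Corollary~\ref{cor:W-1logccomp} and Lemma~\ref{lem:W-1logkcomp}, take the smaller bound, and use monotonicity of $f$ to rewrite $\min\{f(k),f(m)\} = f(\min\{k,m\})$. There is no real obstacle here; the interesting content lies entirely in the earlier lemmas. I expect to write it as a single short paragraph inside the proof environment.

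\begin{proof}[Proof sketch for Theorem~\ref{teo:W-1logkm}]
By Corollary~\ref{cor:W-1logccomp}, $\BAL_k = \Oh(\lg m / \lg\lg m) \cdot \opt_k$, and by Lemma~\ref{lem:W-1logkcomp}, $\BAL_k = \Oh(\lg k / \lg\lg k) \cdot \opt_k$. Both bounds hold on every instance, so $\BAL_k = \Oh(\min\{\lg m / \lg\lg m,\ \lg k / \lg\lg k\}) \cdot \opt_k$. Since $f(x) = \lg x / \lg\lg x$ is monotone increasing for $x$ beyond a small constant, $\min\{f(k), f(m)\} = \Theta(f(\min\{k,m\}))$, giving the claimed bound.
\end{proof}
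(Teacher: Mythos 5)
Your proposal is correct and matches the paper's approach exactly: the paper derives the theorem directly from Corollary~\ref{cor:W-1logccomp} and Lemma~\ref{lem:W-1logkcomp} by taking the better of the two bounds. The monotonicity observation for $f(x)=\lg x/\lg\lg x$ that you spell out is the (routine) detail the paper leaves implicit.
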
 

\subsection{Lower Bounds}
\label{sec:minlb-disjoint}%
In this section we present lower bounds for \Minimum that
hold even for the more restricted case where
the family~$\mathcal{S}$ consists of disjoint sets.

\begin{theorem}
\label{th:lbMcor}%
For arbitrarily large~$m$ and any deterministic algorithm
$\ALG$, there exists an instance with~$m$ sets
and~$k>m$ queries per round,
such that $\opt_k=1$, $\ALG_k \ge W^{-1}(\lg m)$ and $\ALG_k = \Omega(W^{-1}(\lg k))$.
Hence, there is no $\oh(\lg \min \{k, m\} / \lg\lg \min \{k, m\})$-round-competi\-tive deterministic algorithm.
\end{theorem}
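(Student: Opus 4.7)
The plan is to construct, given a deterministic algorithm $\ALG$ and a target $R := \lceil W^{-1}(\lg m)\rceil$ chosen so that $R^R \ge m > R^{R-1}$, an adaptive instance on $m$ pairwise disjoint sets and $k := mR > m$ queries per round. Each set $S_i$ will consist of $R$ open intervals $I_{i,j} = (j-1, R)$ for $j = 1, \ldots, R$. The adversary eventually realises values so that the minimum of $S_i$ lies at position $p_i$ with $v_{i,p_i} = p_i - 1/2$ and all other $v_{i,j}$ close to $R$; this forces any feasible query set to contain exactly the prefix $I_{i,1},\ldots,I_{i,p_i}$ in each set, so $\opt_1 = \sum_i p_i$. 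We will keep $\sum_i p_i \le k$ so that $\opt_k = 1$.

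The adversary simulates $\ALG$ and answers every new query initially with a value close to $R$, which is consistent with $p_i$ being any not-yet-queried position in $S_i$. At the end of each round $r$, the adversary examines the cumulative query counts $Q_{i,r}$ and declares a carefully chosen subset $G_r$ of active sets solved, committing $p_i := Q_{i,r-1}+1$ for each $i \in G_r$; the committed value $v_{i,p_i} = p_i - 1/2$ is returned for the relevant query of round $r$, which is consistent since all earlier answers in $S_i$ were close to $R$. The adversary sizes $G_r$ so that the number of active sets shrinks by a factor of at most $1/R$ each round, i.e.\ $a_{r+1} \ge a_r/R$. Since $R^R \ge m$, after $R-1$ rounds at least $m/R^{R-1} \ge R \ge 1$ sets remain active, forcing $\ALG$ to play round $R$; this gives $\ALG_k \ge R = W^{-1}(\lg m)$. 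Because $k = mR = \Theta(m\lg m/\lg\lg m)$ means $\lg k = \Theta(\lg m)$, we also obtain $\ALG_k = \Omega(W^{-1}(\lg k))$ from the same instance.

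The key accounting verifies $\sum_i p_i \le k$. The adversary selects $G_r$ to consist of the active sets with the smallest $Q_{i,r-1}$ values (the cheapest sets to declare solved). A telescoping analogous to the one in the proof of Lemma~\ref{lem:opt1}, together with the per-round budget $\sum_i q_{i,r}\le k$, then gives
\[
\sum_i p_i \;=\; m + \sum_{r=1}^R \sum_{i \in G_r} Q_{i,r-1} \;\le\; m + k\sum_{s=1}^{R-1} \frac{a_{s+1}}{a_s} \;=\; m + k\cdot\frac{R-1}{R} \;\le\; k,
\]
the last inequality holding because $k \ge mR$. The hard part will be justifying this bound when $\ALG$ does not distribute its queries uniformly across the active sets: the telescoping is cleanest in the uniform case, where $Q_{i,r-1}$ is identical across active sets at round $r$. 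For arbitrary $\ALG$, an exchange argument shows that concentrating queries on a few sets only helps the adversary, since those heavily queried sets become cheap to declare solved at the round they are first touched, while the lightly queried sets retain small $Q_{i,r-1}$ and stay cheap to keep active. Hence the budget bound persists for every deterministic $\ALG$, completing the proof.
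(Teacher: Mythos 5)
Your construction has a fatal sizing problem: each of the $m$ sets contains only $R$ intervals, so the whole instance has $n=mR=k$ intervals in total. A deterministic algorithm can therefore query \emph{every} interval in the first round and solve all sets, giving $\ALG_k=1$ regardless of the adversary's answers; the claimed bound $\ALG_k\ge R$ is simply false for this instance. To force the algorithm to spend many rounds, each set must contain far more elements than the algorithm could ever touch (the paper uses $Mk$ elements per set, with intervals $(1+i\varepsilon,100+i\varepsilon)$ so that the optimal query set is a prefix), so that at every round each active set still has unqueried candidate positions for the minimum.

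Once the sets are large enough, your budget accounting breaks in a second place: the rule ``declare solved the active sets with the \emph{smallest} cumulative counts'' is the wrong selection criterion. The sets kept alive are then exactly the ones in which $\ALG$ has invested the most queries, and all of those queries become mandatory for $\OPT_1$ when such a set is eventually solved (its $p_i$ equals its cumulative count plus one). If $\ALG$ dumps $k/2$ queries into one set in round~1, your adversary keeps that set active and has already committed half the budget $\opt_1\le k$ to it; the claimed telescoping $\sum_{i\in G_r}Q_{i,r-1}\le k\,a_{r+1}/a_r$ holds only in the uniform case and the ``exchange argument'' goes the wrong way. The paper's adversary does the opposite: it solves the sets with the \emph{largest per-round} investment (greedily, until at least a $(M-1)/M$ fraction of the round's queries lie in solved sets), which simultaneously guarantees at least $a/M$ survivors and ensures the survivors absorb at most $k/M$ new mandatory queries per round, so that $\opt_1\le (M-1)k/M+m=k$. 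Your overall blueprint (shrink the active sets by a factor $\approx R$ per round for $R\approx W^{-1}(\lg m)$ rounds while keeping $\opt_1\le k$) matches the paper's, but both the instance size and the set-selection rule need to be fixed for the argument to go through.
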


\begin{proof}
Fix an arbitrarily large positive integer~$M$.
Consider an instance with $m=M^{M}$ sets, and let $k=M^{M+1}$.
Each set contains~$Mk$ elements, with the $i$-th element having
uncertainty interval $(1+i\varepsilon,100+i\varepsilon)$
for $\varepsilon=1/m$.
The adversary will pick for each set an index~$j$ and set
the $j$-th element to be the minimum, by letting
it have value $1+(j+0.5)\varepsilon$, while the $i$-th
element for $i\neq j$ is given value $100+(i-0.5)\varepsilon$.
The optimal query set for the set is thus its first~$j$ elements.
We assume that an algorithm queries the elements of each set
in order of increasing lower interval endpoints. (Otherwise, the lower bound
only becomes larger.)

Consider the start of
a round when $a \le m$ sets are still active; initially $a = m$.
The adversary observes how the algorithm distributes its~$k$
queries among the active sets and repeatedly adds the active
set with largest number of queries (from the current round)
to a set~$\mathcal{L}$, until the total number of queries from the current
round in sets of~$\mathcal{L}$ is at least $(M - 1) k / M$. Let~$\mathcal{S}'$ denote the
remaining active sets. Note that $|\mathcal{S}'| \ge a / M$. For the sets
in~$\mathcal{L}$, the adversary chooses the minimum in
such a way that a single query in the current round would have
been sufficient to find it, while the sets in~$\mathcal{S}'$ remain
active (and so the optimum must make the same queries in
them that the algorithm has made in the current round, and
these are at most~$k / M$ queries). We continue for~$M$ rounds.
In the $M$-th round, the adversary picks the minimum
in all remaining sets in such way that a single
query in that round would have been sufficient to solve
the set.
The optimal number of queries is then at most
$(M - 1) k / M + M^M = (M - 1) k / M + k / M = k$, and hence
$\opt_k = 1$. On the other hand, we have $\ALG_k = M$.

We can now express this lower bound
in terms of $k$
or $m$ as follows: As $m=M^M$,
we have $\lg m = M \lg M$ and hence $M=W^{-1}(\lg m)$.
As $k=M^{M+1}$, we have $\lg k = (M+1) \lg M$
and hence $M = \Omega(W^{-1}(\lg k))$.
Thus, the theorem follows.
\end{proof}

\begin{theorem}
\label{th:lbkm}%
No deterministic algorithm $\ALG$ attains $\ALG_k \le \opt_k + \oh(\lg \min\{k, m\})$.
\end{theorem}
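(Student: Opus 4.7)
The plan is to strengthen the adversary of Theorem~\ref{th:lbMcor} by having it halve the number of active sets per round instead of reducing them by a factor of~$M$. This yields about $\lg m'$ rounds (where $m' = \min\{k,m\}$) at the cost of letting $\opt_k$ grow roughly like $R/2$ rather than remaining~$1$; the resulting additive gap is about $R - R/2 = \Omega(\lg m')$, which is enough to rule out any bound of the form $\opt_k + \oh(\lg\min\{k,m\})$.

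I set $m' = \min\{k,m\}$ and use $m'$ hard disjoint sets (plus $m - m'$ trivial sets, if needed), each with enough elements carrying the same nested-interval structure as in the proof of Theorem~\ref{th:lbMcor}; as in that proof, we may assume the algorithm queries each set in order of non-decreasing left endpoints. Let $A_r$ denote the set of active sets at the start of round~$r$, $a_r = |A_r|$, with $a_1 = m'$. After observing the algorithm's per-set round-$r$ query counts $q_{r,i}$, the adversary builds $\mathcal{L}_r \subseteq A_r$ by taking the $\lceil a_r/2 \rceil$ active sets with largest $q_{r,i}$ among those with $q_{r,i} \ge 1$ (or all such sets, if fewer than $\lceil a_r/2 \rceil$ qualify). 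For each $i \in \mathcal{L}_r$, it commits the position of the minimum of set~$i$ to $p_i = Q_{r-1,i} + 1$, where $Q_{r-1,i} = \sum_{r' < r} q_{r',i}$. Because the algorithm queries in left-to-right order, position~$p_i$ is reached in round~$r$, so set~$i$ is solved exactly in round~$r$; the sets in $\mathcal{S}'_r = A_r \setminus \mathcal{L}_r$ survive, with their minima deferred.

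Two invariants then finish the argument. First, $|\mathcal{L}_r| \le \lceil a_r/2 \rceil$ implies $a_{r+1} \ge \lfloor a_r/2 \rfloor$, so any algorithm needs $\ALG_k \ge R \ge \lfloor \lg m' \rfloor + 1$ rounds. Second, the total number of queries made in $\mathcal{S}'_r$ is at most $k/2$ in every round: either $\mathcal{L}_r$ is the top-$\lceil a_r/2 \rceil$ of $A_r$ by query count and so absorbs at least half of the round's queries, or $\mathcal{L}_r$ contains every active set the algorithm touched and $\mathcal{S}'_r$ receives zero queries. Letting $r_i$ be the round in which set~$i$ is solved, $\opt_1 = \sum_i p_i = m' + \sum_i Q_{r_i-1,i}$; interchanging the sums, $\sum_i Q_{r_i - 1,i} = \sum_{r'=1}^{R-1}(\text{queries in }\mathcal{S}'_{r'}) \le (R-1)k/2$. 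Since $m' \le k$, this gives $\opt_k \le (R+3)/2$, and hence $\ALG_k - \opt_k \ge (R-3)/2 = \Omega(\lg m') = \Omega(\lg\min\{k,m\})$, contradicting the hypothetical bound.

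The delicate point is verifying the second invariant uniformly across all algorithmic regimes: concentrated strategies (few sets touched per round, high per-set counts) fall into the second case of the adversary's rule, while balanced strategies fall into the first; both, as well as everything in between, are controlled by a single inequality. The adversary's lazy commitment to $p_i$ remains consistent with the left-to-right order automatically, since $p_i = Q_{r-1,i}+1$ lies in $(Q_{r-1,i}, Q_{r,i}]$ whenever $q_{r,i} \ge 1$.
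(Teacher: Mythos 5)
Your proof is correct, but it reaches the theorem by a genuinely different route than the paper. The paper's adversary (taking $k=m$) solves exactly \emph{one} set per round --- the set in which the algorithm placed the most queries, hence at least $k/a$ of them --- and keeps all other sets active; over the resulting $m$ rounds this forces at least $k/m+k/(m-1)+\cdots+k/2+k-m=k(H(m)-1)=\Omega(k\lg k)$ wasted queries, and the additive gap of $\Omega(\lg\min\{k,m\})$ rounds then falls out of Proposition~\ref{prop:wasted}. Your adversary instead halves the set of active sets each round, so the game lasts only $R\ge\lfloor\lg m'\rfloor+1$ rounds, and the real work goes into bounding $\opt_k$ from \emph{above}: the top-half selection rule guarantees that at most $k/2$ queries per round land in surviving sets, which after the interchange of summations gives $\opt_1\le m'+(R-1)k/2$ and hence $\opt_k\le(R+3)/2$ and a gap of $(R-3)/2=\Omega(\lg\min\{k,m\})$. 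I checked the two invariants and the accounting $\opt_1=\sum_i p_i$; they hold (the pairing argument for ``the top $\lceil a_r/2\rceil$ sets absorb at least half the round's queries'' is the key step and is sound). What each approach buys: the paper's argument is shorter and yields hard instances on which $\opt_k$ is large (about $m-H(m)$), showing the additive gap persists for large optima, whereas yours yields hard instances with $\opt_k=\Oh(\lg\min\{k,m\})$ and, as a byproduct, the relation $\ALG_k\ge 2\opt_k-3$ on them. Two small points to make explicit in a full write-up: (i) a round in which the algorithm queries no active element solves nothing and only increases $\ALG_k$, so you may assume at least one active set is touched per round, which also caps the game at $m'$ productive rounds and hence bounds the number of elements each set must contain; and (ii) the reduction to algorithms querying each set in left-endpoint order is the same (asserted) simplification the paper uses in Theorem~\ref{th:lbMcor}, so it is fair to invoke it.
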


\begin{proof}
Let $k = m$ be an arbitrarily large integer.
The intervals of the $m$ sets are chosen as in the proof of Theorem~\ref{th:lbMcor},
for a sufficiently large value of~$M$.
Let~$a$ be the number of active sets at the start of a round;
initially $a = m$.
After each round, the adversary considers the set~$S_j$ in
which the algorithm has made the largest number of queries,
which must be at least $k/a$. The adversary picks the minimum element
in~$S_j$ in such a way that a single query in the current
round would have been enough to solve it, and keeps all other
sets active. This continues for~$m$ rounds. The number of
wasted queries is at least $k/m + k/(m-1) + \cdots + k/2 + k - m= k \cdot (H(m)-1)
= k \cdot \Omega(\lg k)$.
As the algorithm must also make all queries in~$\OPT_1$,
the theorem follows from Proposition~\ref{prop:wasted}.
\end{proof}

We conclude thus that the balanced algorithm attains matching upper bounds for disjoint sets.
For non-disjoint sets, a small gap remains between our lower and upper bounds.

\section{Selection}
\label{sec:median}%
An instance of the \Selection problem is given by a set $\mathcal{I}$ of $n$ intervals and
an integer~$i$, $1\le i\le n$.
Throughout this section we denote the $i$-th smallest value in the set of $n$ precise values
by~$v^*$.

\subsection{Finding the Value $v^*$}

If we only want to find the value~$v^*$, then we can adapt the analysis in~\cite{gupta16queryselection} to obtain an algorithm that performs at most $\opt_1 + i - 1$ queries, simply by querying the intervals in the order of their left endpoints.
This is the best possible and can easily be parallelized in $\opt_k + \left\lceil \frac{i-1}{k} \right\rceil$ rounds.
Note that we can assume $i \leq \lceil n / 2 \rceil$, since otherwise we can consider the $i$-th largest value problem, noting that the $i$-th smallest value is the $(n-i+1)$-th largest value.
We also assume that every input interval is either trivial or open, since otherwise (if arbitrary closed intervals are allowed) the problem has a lower bound of~$n$ on the competitive ratio, using the same instance as presented in~\cite{gupta16queryselection} for the problem of identifying the minimum element.

Let $I_{j_1}$ be the interval with the $i$-th smallest left endpoint, and let $I_{j_2}$ be the interval with the $i$-th smallest right endpoint.
Note that any interval~$I_j$ with $u_j < \ell_{j_1}$ or $\ell_j > u_{j_2}$ can be discarded (and the value of $i$ adjusted accordingly).

We analyze the algorithm that simply queries the~$k$ leftmost non-trivial intervals until the problem is solved.

\begin{theorem}
\label{teo:medianleftmost}
For instances of the $i$-th smallest value problem where all input intervals are open or trivial,
there is an algorithm that returns the $i$-th smallest value $v^*$ and uses at most
\begin{displaymath}
 \left\lceil \frac{\opt_1 + i-1}{k} \right\rceil \leq \opt_k + \left\lceil \frac{i-1}{k} \right\rceil
\end{displaymath}
rounds.
\end{theorem}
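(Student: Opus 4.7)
The plan is to show that the algorithm makes at most $\opt_1 + (i-1)$ queries in total and then invoke Proposition~\ref{prop:wasted} with $w \le i-1$: this immediately yields both the $\lceil(\opt_1+i-1)/k\rceil$ bound and its weakening $\opt_k + \lceil(i-1)/k\rceil$.

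The key step is to characterize $\OPT_1$ exactly. I classify each non-trivial (hence open) interval $I_j = (\ell_j,u_j)$ as \emph{left} if $u_j \le v^*$, \emph{middle} if $\ell_j < v^* < u_j$, and \emph{right} if $\ell_j \ge v^*$; every non-trivial interval falls into exactly one class. I claim $\OPT_1$ coincides with the middle class. A middle interval must lie in every feasible query set, since otherwise its value could be placed anywhere in $(\ell_j,u_j)$ and in particular on either side of $v^*$, changing the $i$-th smallest. Conversely, once every middle interval has been queried, each unqueried non-trivial interval has either $v_j < v^*$ (left class) or $v_j > v^*$ (right class, using openness to exclude equality when $\ell_j = v^*$), so the order of every element relative to $v^*$ is pinned down.

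Counting wasted queries is then straightforward: a left interval satisfies $v_j < u_j \le v^*$, and at most $i-1$ indices can have $v_j < v^*$, so there are at most $i-1$ left intervals. Because the algorithm queries non-trivial intervals in order of non-decreasing $\ell_j$ and every left or middle interval has $\ell_j < v^*$ while every right interval has $\ell_j \ge v^*$, the algorithm processes left and middle intervals first and could touch a right interval only after $\OPT_1$ is fully queried---at which point the problem is solvable and the algorithm halts. Thus the total query count is at most $|\OPT_1| + (i-1) = \opt_1 + i - 1$, every round prior to the last is full, and Proposition~\ref{prop:wasted} delivers both stated inequalities.

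The main obstacle is the sufficiency half of the characterization of $\OPT_1$: certifying that after querying every middle interval the algorithm can actually output the numerical value $v^*$, not merely confirm that the $i$-th smallest is determined. I would handle this by observing that some index satisfies $v_j = v^*$ by definition, and any such interval is either trivial (its value is already known) or middle (hence queried), so $v^*$ appears explicitly in the known multiset and can be identified, for instance as the unique known value $w$ such that the count of known-or-forced-smaller values equals exactly $i-1$.
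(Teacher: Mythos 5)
Your overall architecture (query a prefix by left endpoint, bound the wasted queries by $i-1$, invoke Proposition~\ref{prop:wasted}) matches the paper's, and your sufficiency argument and the observation that the left and middle intervals form a prefix of the left-endpoint order are fine. But there is a genuine gap in the key counting step: the claim that $\OPT_1$ coincides with the middle class is false. Necessity fails whenever there are ties at $v^*$, i.e., whenever fewer than $i-1$ precise values lie strictly below $v^*$: moving one middle interval's value across $v^*$ changes that interval's rank but need not change the $i$-th smallest \emph{value}. Concretely, take $i=2$ and intervals $\{1\},\{1\},(0,2),(0.5,3)$ with precise values $1,1,1.5,2.5$. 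Both open intervals are middle, yet querying only $(0.5,3)$ (or only $(0,2)$) already certifies $v^*=1$, so $\opt_1=1<2$. Since your bound on the wasted queries rests entirely on ``wasted $=$ queried left intervals, of which there are at most $i-1$'', and middle intervals outside $\OPT_1$ are also wasted, the inequality ``total queries $\le \opt_1+i-1$'' does not follow from your argument as written.

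The statement is still true, but the middle intervals not in $\OPT_1$ must be charged to the $i-1$ budget rather than to $\opt_1$. This is exactly what the paper's proof does: after querying $\OPT_1$, it considers the set $L$ of intervals now entirely to the left of $v^*$ together with the set $X$ of still-non-trivial intervals containing $v^*$ (these are precisely your middle intervals outside $\OPT_1$), and argues $|L|+|X|\le i-1$ --- otherwise a realization consistent with the answers to $\OPT_1$ would place at least $i$ values strictly below $v^*$, contradicting feasibility of $\OPT_1$. Your set of left and middle intervals is contained in the paper's prefix $Q=(L\cap\mathcal{I}')\cup X\cup M^*\cup R^*$, which has size at most $\opt_1+i-1$; replacing your characterization of $\OPT_1$ by this counting repairs the proof.
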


\begin{proof}
Let~$\mathcal{I}'$ be the set of non-trivial intervals in the input, ordered by non-decreasing left endpoint.
We show that there is a set~$Q \subseteq \mathcal{I}'$ of size at most $\opt_1 + i - 1$ that is a prefix of~$\mathcal{I}'$ in the given ordering and has the property that, after querying~$Q$, the instance is solved.
Given the existence of such a set~$Q$, it is clear that the theorem follows.

Fix an optimum query set~$\OPT_1$, and let~$v^*$ be the $i$-th smallest value.
After querying~$\OPT_1$, assume that there are~$m$ trivial intervals with value~$v^*$.
Note that $m \geq 1$, since it is necessary to determine the value~$v^*$.
Those~$m$ intervals are either queried in~$\OPT_1$ or already were trivial intervals in the input.
We classify the intervals in $\mathcal{I}$ into the following categories, depending on where they fall \emph{after} querying~$\OPT_1$:
 \begin{enumerate}
  \item The set~$M$ (of size~$m$) consisting of trivial intervals whose value is~$v^*$;
  \item The set~$X$ consisting of non-trivial intervals that contain~$v^*$;
  \item The set~$L$ of intervals that are to the left of $v^*$;
  \item The set~$R$ of intervals that are to the right of $v^*$.
 \end{enumerate}
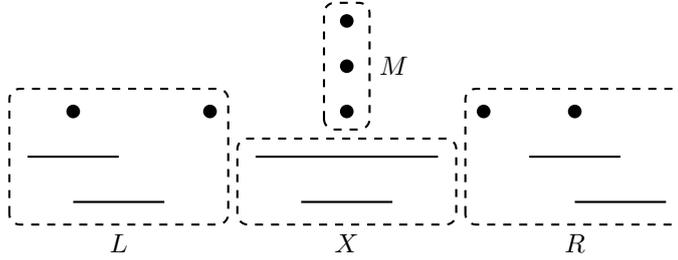
\begin{figure}[ht!]
  \centering
   \begin{tikzpicture}[thick, scale=0.6]
    \draw (0, 1) -- (2, 1);
    \draw (1, 0) -- (3, 0);
    \draw (5, 1) -- (9, 1);
    \draw (6, 0) -- (8, 0);
    \draw (11, 1) -- (13, 1);
    \draw (12, 0) -- (14, 0);
   
    \fill[black] (1, 2) circle (0.15cm);
    \fill[black] (4, 2) circle (0.15cm);
    \fill[black] (7, 2) circle (0.15cm);
    \fill[black] (7, 3) circle (0.15cm);
    \fill[black] (7, 4) circle (0.15cm);
    \fill[black] (10, 2) circle (0.15cm);
    \fill[black] (12, 2) circle (0.15cm);
    
    \draw[rounded corners, dashed] (-0.4, -0.5) rectangle (4.4, 2.5);
    \path (-0.4, -0.5) -- (4.4, -0.5) node[midway, below]{$L$};
    \draw[rounded corners, dashed] (4.6, -0.5) rectangle (9.4, 1.4);
    \path (4.6, -0.5) -- (9.4, -0.5) node[midway, below]{$X$};
    \draw[rounded corners, dashed] (9.6, -0.5) rectangle (14.4, 2.5);
    \path (9.6, -0.5) -- (14.4, -0.5) node[midway, below]{$R$};
    \draw[rounded corners, dashed] (6.5, 1.6) rectangle (7.5, 4.4);
    \path (7.5, 1.6) -- (7.5, 4.4) node[midway, anchor=west]{$M$};
   \end{tikzpicture}
  \caption{An illustration of sets~$M$,~$X$,~$L$ and~$R$ in the proof of Theorem~\ref{teo:medianleftmost}.}
  \label{fig:setsmedian}
\end{figure}
 We illustrate this classification in Figure~\ref{fig:setsmedian}.
 Note that intervals in~$L$ and~$R$ may intersect intervals in~$X$, but cannot contain~$v^*$.
 Let $M^* = M \cap \OPT_1$, $L^* = L \cap \OPT_1$ and $R^* = R \cap \OPT_1$.
 Note that $X \cap \OPT_1 = \emptyset$, and that every interval in $M \setminus M^*$ is trivial in the input.
 
 We claim that the set $Q = (L \cap \mathcal{I'}) \cup X \cup M^* \cup R^*$ is a prefix of~$\mathcal{I}'$ in the given ordering (note that $(X \cup M^* \cup R^*) \setminus \mathcal{I}' = \emptyset$), that querying~$Q$ suffices to solve the instance, and that $|Q| \leq \opt_1 + i - 1$.
 Clearly, every interval in $L \cup X \cup M^*$ comes before all the intervals in $R \setminus R^*$ in the ordering considered.
 It also holds that every interval in~$R^*$ comes before all the intervals in $R \setminus R^*$ in the ordering, since otherwise an interval in~$R^*$ not satisfying this condition could be removed from $\OPT_1$.
 Furthermore, querying all intervals in~$Q$ is enough to solve the instance, because every interval in $R \setminus R^*$ is to the right of~$v^*$, and the optimum solution can decide the problem without querying them.
 Thus it suffices to bound the size of~$Q$.
 Note then that $|L| + |X| \leq i - 1$ since, after querying~$\OPT_1$, the $i$-th smallest interval is in~$M$, and any interval in $L \cup X$ has a left endpoint to the left of~$v^*$.
 Therefore,
 \begin{displaymath}
  |Q| \leq |L| + |X| + |M^*| + |R^*| \leq i - 1 + |M^*| + |R^*| \leq \opt_1 + i - 1,
 \end{displaymath}
which concludes the proof.
\end{proof}

The upper bound of $\left\lceil \frac{\opt_1 + i-1}{k} \right\rceil$ is best possible, because we can construct a lower bound of $\opt_1 + i - 1$ queries to solve the problem.
It uses the same instance as described in~\cite{gupta16queryselection} for the problem of identifying an $i$-th smallest element (but not necessarily finding its precise value).
We include a description of the instance for the sake of completeness.
Consider~$2i$ intervals, comprising~$i$ copies of $(0, 5)$ and~$i$ copies of $\{3\}$.
For the first~$i-1$ intervals $(0, 5)$ queried by the algorithm, the adversary returns a value of~$1$, so the algorithm also needs to query the final interval of the form $(0,5)$ to decide the problem.
Then the adversary sets the value of that interval to~$4$, and querying only that interval would be sufficient for determining that~$3$ is the $i$-th smallest value. Hence any deterministic algorithm makes at least $i$ queries, while $\opt_1=1$.

\subsection{Finding all Elements with Value $v^*$}

Now we focus on the task of finding $v^*$ as well as identifying all intervals in $\mathcal{I}$
whose precise value equals~$v^*$.
For ease of presentation, we assume that all the intervals in $\mathcal{I}$ are closed.
The result can be generalized to arbitrary intervals without any significant new ideas,
but the proofs become longer and require more cases.
A complete proof is included in Appendix~\ref{app:medianopen}.

Let us begin by observing that the optimal query set is easy to characterize.

\begin{lemma}
\label{lem:median_opt}%
Every feasible query set contains all non-trivial intervals that contain~$v^*$.
The optimal query set $\OPT_1$ contains all non-trivial intervals that contain~$v^*$
and no other intervals.
\end{lemma}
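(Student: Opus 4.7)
The plan is to split the proof into two halves matching the two sentences of the statement: necessity (any feasible query set must contain every non-trivial interval covering $v^*$) and sufficiency (querying exactly this collection already suffices). Optimality of $\OPT_1$ then drops out by comparing sizes.

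For necessity I would use a standard adversarial swap. Fix a realization $v$ with $i$-th smallest value $v^*$ and suppose, for contradiction, that some non-trivial $I_j$ with $v^* \in I_j$ is missing from a feasible query set $Q$. I construct an alternative realization $v'$ that agrees with $v$ everywhere except at index $j$ and is still consistent with the answers to $Q$; since $v$ and $v'$ are indistinguishable from the results of $Q$, any correct output must be valid for both simultaneously. If $v_j = v^*$, pick $v'_j \in I_j \setminus \{v^*\}$, which exists because $I_j$ is non-trivial, so that $j$ belongs to the target set under $v$ but not under $v'$. If instead $v_j \neq v^*$, set $v'_j = v^*$; a short count of $|\{l : v'_l < v^*\}|$ and $|\{l : v'_l \le v^*\}|$ shows that the $i$-th smallest in $v'$ is again $v^*$, but now $j$ enters the target set. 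In either case the correct outputs differ, contradicting feasibility.

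For sufficiency, let $Q$ be the set of non-trivial intervals containing $v^*$ and let $v'$ be any realization consistent with the answers to $Q$. The key observation is that for every non-trivial $I_l \notin Q$ we have $v^* \notin I_l$, so $I_l$ lies strictly on one side of $v^*$; combined with $v'_l = v_l$ on $Q$ and on trivial intervals, the sign of $v'_l - v^*$ agrees with that of $v_l - v^*$ for every $l$. Hence the rank counts $|\{l : v'_l < v^*\}|$ and $|\{l : v'_l \le v^*\}|$ are preserved, the $i$-th smallest value under $v'$ is still $v^*$, and the set $\{l : v'_l = v^*\}$ coincides with $\{l : v_l = v^*\}$, so $Q$ is feasible. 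Combining both halves, necessity gives $\OPT_1 \supseteq Q$ and feasibility of $Q$ gives $|\OPT_1| \le |Q|$, so $\OPT_1 = Q$ exactly. The main obstacle to write out carefully is the $v_j = v^*$ subcase of necessity: one must note that after the swap the new $i$-th smallest might not still be $v^*$, but both outcomes suffice — if it is still $v^*$ the target set shrinks by losing $j$, and otherwise the reported value itself changes — so both subcases should be stated explicitly rather than glossed over.
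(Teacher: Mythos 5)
Your proposal is correct and follows essentially the same route as the paper's proof: necessity because an unqueried non-trivial interval containing $v^*$ leaves two indistinguishable realizations with different outputs, and sufficiency because after querying all such intervals every element's position relative to $v^*$ is determined. The paper states both halves very tersely; your version merely fills in the adversarial-swap and rank-counting details that the paper leaves implicit.
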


\begin{proof}
If a non-trivial interval $I_j$ containing $v^*$ is not queried, one cannot determine
whether the precise value of $I_j$ is equal to $v^*$ or not. Thus, every feasible
query set contains all non-trivial intervals that contain~$v^*$.

Furthermore, it is easy to see that the non-trivial intervals containing $v^*$ constitute
a feasible query set: Once these intervals are queried, one can determine for each interval
whether its precise value is smaller than~$v^*$, equal to~$v^*$, or larger than~$v^*$.
\end{proof}

Let $I_{j_1}$ be the interval with the $i$-th smallest left endpoint, and let $I_{j_2}$ be the interval with the $i$-th smallest right endpoint.
Then it is clear that $v^*$ must lie in the interval $[\ell_{j_1}, u_{j_2}]$, which we call the {\bf target area}.
The following lemma was essentially shown by Kahan~\cite{kahan91queries}; we include a proof for the sake of completeness.

\begin{lemma}[Kahan, 1991]
\label{lem:existsa}%
Assume that the current instance of {\em \Selection} is not yet solved. Then there is at least
one non-trivial interval $I_j$ in $\mathcal{I}$ that contains the target area, i.e., satisfies $\ell_j\le \ell_{j_1}$
and $u_j\ge u_{j_2}$.
\end{lemma}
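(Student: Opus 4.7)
The plan is to combine a simple pigeonhole count with a short case analysis that handles the possibility that every interval covering the target area happens to be trivial.

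First I would introduce $A = \{I_j : \ell_j \leq \ell_{j_1}\}$ and $B = \{I_j : u_j \geq u_{j_2}\}$. The fact that $I_{j_1}$ has the $i$-th smallest left endpoint immediately gives $|A| \geq i$, and symmetrically $|B| \geq n - i + 1$. Inclusion-exclusion then yields $|A \cap B| \geq |A| + |B| - n \geq 1$, so there is some interval $I_j$ with $\ell_j \leq \ell_{j_1}$ and $u_j \geq u_{j_2}$, which by definition contains the target area. If any such $I_j$ is non-trivial we are done.

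The main obstacle is the degenerate case in which every interval of $A \cap B$ is trivial. Picking such an $I_j$ with $\ell_j = u_j = v$, the chain $v \leq \ell_{j_1} \leq u_{j_2} \leq v$ collapses, forcing $v = \ell_{j_1} = u_{j_2}$. The target area thus shrinks to the single point $\{v\}$, so $v^* = v$ is already uniquely determined by the current information.

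To close this case I would invoke Lemma~\ref{lem:median_opt}: since the instance is not yet solved, there must be a currently non-trivial interval $I_{j'}$ whose uncertainty interval contains $v^*$. But then $\ell_{j'} \leq v^* = \ell_{j_1}$ and $u_{j'} \geq v^* = u_{j_2}$, so $I_{j'}$ itself lies in $A \cap B$, contradicting the assumption that $A \cap B$ consisted of trivial intervals only. Hence $A \cap B$ always contains a non-trivial interval covering the target area, proving the lemma.
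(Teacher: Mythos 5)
Your proof is correct and in substance the same as the paper's: the inclusion--exclusion count $|A\cap B|\ge |A|+|B|-n\ge 1$ is exactly the paper's pigeonhole on the $i$-th smallest left and right endpoints, and your degenerate case (all covering intervals trivial, forcing the target area to collapse to $\{v^*\}$) is the paper's case of a trivial target area, resolved by the same observation that the absence of a non-trivial interval containing $v^*$ would mean the instance is already solved. The only difference is organizational: you case-split on the triviality of the covering interval you found rather than on the triviality of the target area.
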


\begin{proof}
First, assume that the target area is trivial, i.e., $\ell_{j_1}=u_{j_2}=v^*$. If there is no
non-trivial interval in $\mathcal{I}$ that contains $v^*$, then the instance is already solved,
a contradiction.

Now, assume that the target area is non-trivial. Assume that no interval in $\mathcal{I}$
contains the target area. Then all intervals $I_j$ with $\ell_j\le \ell_{j_1}$ have
$u_j < u_{j_2}$. There are at least $i$ such intervals (because
$\ell_{j_1}$ is the $i$-th smallest left endpoint), and hence the $i$-th smallest right endpoint
must be strictly smaller than $u_{j_2}$, a contradiction to the definition of $u_{j_2}$.
\end{proof}

For $k = 1$, there is therefore an online algorithm that makes $\opt_1$ queries: In each
round, it determines the target area of the current instance and queries a non-trivial
interval that contains the target area. (This algorithm was essentially proposed by Kahan~\cite{kahan91queries} for determining all elements with value equal to~$v^*$, without necessarily determining~$v^*$.)
For larger~$k$, the difficulty is how to select additional intervals to query if
there are fewer than $k$ intervals that contain the target area.

The intervals that intersect the target area can be classified into four categories:
\begin{enumerate}[(1)]
 \item $a$ non-trivial intervals $[\ell_j, u_j]$ with $\ell_j \leq \ell_{j_1}$ and $u_j \geq u_{j_2}$; they {\bf contain} the target area;
 \item $b$ intervals $[\ell_j, u_j]$ with $\ell_j > \ell_{j_1}$ and $u_j < u_{j_2}$; they {\bf are strictly contained} in the target area and contain neither endpoint of the target area;
 \item $c$ intervals $[\ell_j, u_j]$ with $\ell_j \leq \ell_{j_1}$ and $u_j < u_{j_2}$; they intersect the target area on the {\bf left};
 \item $d$ intervals $[\ell_j, u_j]$ with $\ell_j > \ell_{j_1}$ and $u_j \geq u_{j_2}$; they intersect the target area on the {\bf right}.
\end{enumerate}
We propose the following algorithm for rounds with $k$ queries:
Each round is filled with as many non-trivial intervals as possible, using the following order: first all intervals of category~(1); then intervals of category~(2); then picking intervals alternatingly from categories~(3) and~(4), starting with category (3). If one of the two categories (3) and (4) is exhausted, the rest of the $k$ queries is chosen from the other category.
Intervals of categories~(3) and~(4) are picked in order of non-increasing length of overlap with the target area, i.e., intervals of category~(3) are chosen in non-increasing order of right endpoint, and intervals of category~(4) in non-decreasing order of left endpoint.
When a round is filled, it is queried, and the algorithm restarts, with a new target area and the intervals redistributed into the categories.

\begin{proposition}
\label{prop:bbound}
At the start of any round, $a \geq 1$
and $b \le a-1$.
\end{proposition}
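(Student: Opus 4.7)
The plan splits naturally into two parts. First, $a\ge 1$ follows immediately from Lemma~\ref{lem:existsa}: a new round is entered only if the current instance is not yet solved, so the lemma produces a non-trivial interval that contains the target area, i.e., belongs to category~(1).

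For $b\le a-1$, the plan is a double-counting argument based only on the definitions of $\ell_{j_1}$ and $u_{j_2}$. Let $a^*$ denote the number of currently present intervals, \emph{trivial or not}, that satisfy $\ell_j\le\ell_{j_1}$ and $u_j\ge u_{j_2}$, so $a^*\ge a$ and every such interval automatically intersects the target area. By the definition of $j_1$, at least $i$ currently present intervals have $\ell_j\le\ell_{j_1}$; these split into the $a^*$ intervals just described and the intervals with $\ell_j\le\ell_{j_1}$ and $u_j<u_{j_2}$, the latter being exactly the category-(3) intervals plus the intervals lying entirely to the left of the target area (whose right endpoint is below $\ell_{j_1}\le u_{j_2}$). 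By the definition of $j_2$, at most $i-1$ currently present intervals have $u_j<u_{j_2}$, and these decompose into the category-(2) intervals (those with $\ell_j>\ell_{j_1}$) together with the same set just described (those with $\ell_j\le\ell_{j_1}$). Subtracting the two inequalities cancels the common terms and yields $a^*-b\ge 1$, hence $b\le a^*-1$.

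The remaining step is to bridge $a^*$ and $a$, and here a small case distinction does the job. If the target area is non-trivial ($\ell_{j_1}<u_{j_2}$), then no trivial interval $\{v_j\}$ can satisfy $v_j\le\ell_{j_1}$ and $v_j\ge u_{j_2}$ simultaneously, so $a^*=a$ and the desired bound follows. If the target area is trivial ($\ell_{j_1}=u_{j_2}$), then category~(2) is empty, because $\ell_j>\ell_{j_1}=u_{j_2}>u_j$ contradicts $\ell_j\le u_j$, and so $b=0\le a-1$ using $a\ge 1$. The main technical subtlety is therefore the handling of trivial intervals sitting on the boundary of the target area, and it is resolved cleanly by this case split; the rest is just bookkeeping on how the four categories together with the already-to-the-left intervals exhaust the two marginals $\{\ell_j\le\ell_{j_1}\}$ and $\{u_j<u_{j_2}\}$.
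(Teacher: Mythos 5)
Your proof is correct and takes essentially the same approach as the paper: both parts rest on Lemma~\ref{lem:existsa} for $a\ge 1$ and on double-counting the two marginals determined by $\ell_{j_1}$ and $u_{j_2}$ over the categories together with the intervals lying outside the target area. The only cosmetic difference is that you subtract the upper bound $b+c+|L|\le i-1$ from the lower bound on $\{\ell_j\le\ell_{j_1}\}$, whereas the paper adds the symmetric right-side inequality and uses the total count $n$; these are algebraically equivalent, and your explicit $a^*$ versus $a$ bookkeeping for trivial intervals is resolved by the same case split on a trivial target area that the paper uses.
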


\begin{proof}
Lemma~\ref{lem:existsa} shows $a\ge 1$. If the target area is trivial,
we have $b=0$ and hence $b\le a-1$. From now on assume that the
target area is non-trivial.

Let $L$ be the set of intervals in $\mathcal{I}$ that lie to the left
of the target area, i.e., intervals $I_j$ with $u_j<\ell_{j_1}$.
Similarly, let $R$ be the set of intervals that lie to the right of the
target area.
Observe that $a+b+c+d+|L|+|R|=n$.

The intervals in $L$ and the intervals of type (1) and (3) include
all intervals with left endpoint at most $\ell_{j_1}$. As $\ell_{j_1}$
is the $i$-th smallest left endpoint, we have $|L|+a+c \ge i$.

Similarly, the intervals in $R$ and the intervals of type (1) and (4)
include all intervals with right endpoint at least $u_{j_2}$.
As $u_{j_2}$ is the $i$-th smallest right endpoint, or equivalently
the $(n-i+1)$-th largest right endpoint, we have
$|R|+a+d\ge n-i+1$.

Adding the two inequalities derived in the previous two
paragraphs, we get
$2a+c+d+|L|+|R|\ge n+1$.
Combined with $a+b+c+d+|L|+|R|=n$, this yields $b \le a-1$.
\end{proof}

\begin{lemma}
\label{lem:lastround}%
If the current round of the algorithm is not the last one,
then the following holds:
If the algorithm queries at least one interval of categories
(3) or~(4), then
the algorithm does not query all intervals of category (3)
that contain~$v^*$, or it does not query all intervals of
category (4) that contain~$v^*$.
\end{lemma}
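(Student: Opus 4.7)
My plan is to prove the lemma by contrapositive: I assume the algorithm queries at least one interval of categories~(3) or~(4) in the current round and also queries every interval of category~(3) containing~$v^*$ together with every interval of category~(4) containing~$v^*$; I will then show that the instance must be solved at the end of the round, making this the last round.

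The first step will invoke the algorithm's priority rule. Because the algorithm fills a round first with category~(1), then with category~(2), and only afterwards alternates between~(3) and~(4), the assumption that some interval of~(3) or~(4) is queried forces every non-trivial interval that started this round in category~(1) or~(2) to be queried in this round as well.

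The second step is a simple case analysis on the position of a non-trivial interval with respect to the target area $[\ell_{j_1}, u_{j_2}]$. Any non-trivial interval lying entirely to the left of the target area has right endpoint below $\ell_{j_1}\le v^*$ and so cannot contain~$v^*$; symmetrically for intervals lying entirely to the right. Hence every non-trivial interval containing~$v^*$ lies in one of the categories~(1)--(4), and every interval of category~(1) contains~$v^*$ automatically. Combining this with the first step and the contrapositive hypothesis, every non-trivial interval containing~$v^*$ has been queried by the end of the round. Since, by Lemma~\ref{lem:median_opt}, this set already forms a feasible query set, the instance is solved and the round is the last.

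I do not anticipate a real obstacle; the argument is essentially bookkeeping against the algorithm's priority order. The one wrinkle worth a sentence is the degenerate case $\ell_{j_1}=u_{j_2}=v^*$ (trivial target area): there, category~(2) is empty and no interval of category~(3) or~(4) can actually contain~$v^*$, so the same conclusion follows directly from exhausting category~(1).
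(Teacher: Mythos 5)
Your proposal is correct and follows essentially the same argument as the paper: the priority order forces all of categories (1) and (2) to be queried before any of (3) or (4), so under the (negated) hypothesis all non-trivial intervals containing $v^*$ are queried, which by Lemma~\ref{lem:median_opt} solves the instance and contradicts the round not being the last. Your extra bookkeeping (checking that every non-trivial interval containing $v^*$ falls into one of the four categories, and the trivial-target-area case) is detail the paper leaves implicit but does not change the argument.
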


\begin{proof}
Assume for a contradiction that the algorithm queries at
least one interval of categories~(3) or (4), and that it queries all
intervals of categories (3) and (4) that contain~$v^*$.
Observe that the algorithm also queries all intervals
in categories (1) and (2), as otherwise it would not have
started to query intervals of categories (3) and (4).
Thus, the algorithm has queried all intervals that
contain~$v^*$ and, hence, solved the problem, a contradiction
to the current round not being the last one.
\end{proof}

\begin{theorem}
\label{th:selection}%
There is a $2$-round-competitive algorithm for \Selection.
\end{theorem}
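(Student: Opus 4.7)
The plan is to apply Proposition~\ref{prop:wasted}. By Lemma~\ref{lem:median_opt}, $\OPT_1$ consists exactly of the non-trivial intervals containing $v^*$, and the proposed algorithm only stops when these have all been queried, so it eventually makes every query in $\OPT_1$ and exactly $k$ queries in each round except possibly the last. Hence Proposition~\ref{prop:wasted} gives $\ALG_k \leq \opt_k + \lceil w/k \rceil$, where $w$ counts the wasted queries in all non-final rounds. Since $\opt_1 \leq k \cdot \opt_k$, it suffices to prove $w \leq \opt_1$, which I will do by showing that in every non-last round the number of wasted queries is at most the number of useful queries.

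Fix a non-last round. Let $q_i$ denote the number of queries from category~$(i)$, let $a,b,c,d$ be the sizes of the four categories at the start of the round, and let $c', d'$ be the numbers of useful (that is, $v^*$-containing) non-trivial intervals in categories~(3) and~(4). All $q_1 = a$ queries from~(1) are useful, and Proposition~\ref{prop:bbound} gives $b \leq a - 1$, so if $q_3 = q_4 = 0$ the bound wasted $\leq b < a \leq$ useful is immediate. Otherwise the algorithm has queried all of~(1) and~(2) (so $q_1 = a$ and $q_2 = b$) and fills the rest of the round by alternating between~(3) and~(4) starting with~(3), in order of non-increasing overlap with the target area; this ordering guarantees that the useful intervals of each of~(3) and~(4) are queried before any wasted one.

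By Lemma~\ref{lem:lastround}, at least one of~(3), (4) has a useful interval that is not queried in this round. I would then split into three cases according to whether the alternation-and-fill procedure exhausts neither, (3), or~(4). If neither is exhausted, one has $q_3 \geq q_4$ and $q_3 - q_4 \in \{0,1\}$, and in both subcases permitted by Lemma~\ref{lem:lastround} (either all $q_3$ queries are useful, or all $q_4$ queries are useful) the bound wasted $\leq$ useful reduces to $b \leq a - 1$ together with $c' \geq 0$. If~(3) is exhausted ($q_3 = c$), Lemma~\ref{lem:lastround} forces all $q_4$ queries to be useful; here the alternation enforces $q_4 \geq c - 1$, which is exactly what is needed so that the up to $c - c'$ wasted~(3) queries are dominated by the $c' + q_4$ useful queries from~(3) and~(4), via the arithmetic $b + c \leq a + 2c' + q_4$. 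The case where~(4) is exhausted is symmetric and uses $q_3 \geq d$. Summing the per-round inequalities yields $w \leq \opt_1 \leq k \cdot \opt_k$, and therefore $\ALG_k \leq 2 \opt_k$. The main obstacle is the subcase where the alternation exhausts one category and keeps filling the round from the other: a single round may then contain many more queries in one category than useful ones there, and to deal with it one needs Lemma~\ref{lem:lastround} (to conclude that the non-exhausted category is still in its useful-prefix phase) together with the quantitative relation $q_4 \geq c - 1$ (or $q_3 \geq d$) coming from the alternation.
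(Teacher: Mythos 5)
Your proposal is correct and follows essentially the same route as the paper's proof: a per-round accounting in every non-final round showing that wasted queries are at most useful queries, using Proposition~\ref{prop:bbound} to control categories (1)--(2) and Lemma~\ref{lem:lastround} together with the useful-first ordering and the alternation invariants ($q_4 \ge c-1$, resp.\ $q_3 \ge d$, which the paper phrases as $|D|\le|C|$, resp.\ $|C|\le|D|+1$) to control categories (3)--(4), and then concluding via Proposition~\ref{prop:wasted}. The only cosmetic difference is that you case-split on which category is exhausted while the paper splits on which category retains an unqueried $v^*$-containing interval; the two decompositions cover the same cases with the same arithmetic.
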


\begin{proof}
Consider any round of the algorithm that is not the last one.
Let~$A$, $B$, $C$ and~$D$ be the sets of intervals of categories (1), (2), (3) and~(4)
that are queried in this round, respectively.
Let~$A^*$, $B^*$, $C^*$ and~$D^*$ be the subsets of $A$, $B$, $C$ and~$D$ that are in~$\OPT_1$, respectively.
By Lemmas~\ref{lem:median_opt} and~\ref{lem:existsa},
$|A| = |A^*|\ge 1$.
Since the algorithm prioritizes category~(1), by Proposition~\ref{prop:bbound} we have $|B| \leq |A| -1$, and thus $|A \cup B| \le 2 \cdot |A| -1 = 2\cdot |A^*| -1
\le 2(|A^*|+|B^*|)-1$.

For bounding the size of $C \cup D$, first note that the order in which the algorithm
selects the elements of categories (3) and (4) ensures that, within each category,
the intervals that contain~$v^*$ are selected first.
By Lemma~\ref{lem:lastround}, there exists a category in which the algorithm
does not query all intervals that contain~$v^*$ in the current round.
If that category is (3), we have $|C|=|C^*|$ and, by the alternating
choice of intervals from (3) and (4) starting with~(3), $|D|\le |C|$
and hence $|C\cup D|\le 2\cdot|C^*|\le 2(|C^*|+|D^*|)$.
If that category is (4), we have $|D|=|D^*|$ and $|C|\le |D|+1$,
giving $|C\cup D|\le 2\cdot|D^*|+1\le 2(|C^*|+|D^*|)+1$.
In both cases, we thus have $|C\cup D|\le 2(|C^*|+|D^*|)+1$.

Combining the bounds obtained in the two previous paragraphs, we
get $|A\cup B\cup C \cup D|\le 2(|A^*|+|B^*|+|C^*|+|D^*|)$.
This shows that, among the queries made in the round, at most half are wasted.
The total number of wasted queries in all rounds except the last
one is hence bounded by $\opt_1$.
Since the algorithm fills each round except possibly the last one and
also queries all intervals in $\OPT_1$, the theorem follows by
Proposition~\ref{prop:wasted}.
\end{proof}

We also have the following lower bound, which proves that our algorithm has the best possible multiplicative factor.
We remark that it uses instances
with $\opt_k=1$, and we do not know how to scale it to larger
values of $\opt_k$. In its present form, it does not
exclude the possibility of an algorithm using at most $\opt_k+1$ rounds.

\begin{lemma}
\label{lem:medianlb}%
There is a family of instances of {\em \Selection} with
$k = i \geq 2$ with $\opt_1\le i$ (and hence $\opt_k=1$) such that any algorithm that
makes $k$ queries in the first round
needs at least two rounds and performs at least $\opt_1 + \lceil (i-1)/2 \rceil$ queries.
\end{lemma}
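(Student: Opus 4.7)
The plan is to construct a single instance of $n = 2i - 1$ closed intervals and let an adaptive adversary choose between two realizations. I take one ``wide'' interval $W = [0, 10]$, $i - 1$ copies $L_1, \ldots, L_{i-1}$ of $[0, 5]$, and $i - 1$ copies $R_1, \ldots, R_{i-1}$ of $[5, 10]$. A quick endpoint count gives $\ell_{j_1} = 0$ and $u_{j_2} = 10$, so the target area is $[0, 10]$; in the classification used by the algorithm of this section, $W$ is the unique category-$(1)$ interval, the $L_j$'s form category~$(3)$, and the $R_j$'s form category~$(4)$. The two realizations are as follows. Realization~$L$ sets the precise value of $W$ and each $L_j$ to $2$ and each $R_j$ to $8$; the sorted values are $(2,\ldots,2,8,\ldots,8)$ with $i$ twos, so $v^* = 2$ and, by Lemma~\ref{lem:median_opt}, $\OPT_1 = \{W, L_1, \ldots, L_{i-1}\}$ has size $i$. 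Realization~$R$ is the mirror image with $W$ set to $8$, yielding $v^* = 8$ and $\OPT_1 = \{W, R_1, \ldots, R_{i-1}\}$. Both satisfy $\opt_1 = i$, so $\opt_k = 1$ since $k = i$.

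For an arbitrary deterministic algorithm that makes $k = i$ queries in round~$1$, I write $q_W \in \{0, 1\}$, $q_L$ and $q_R$ for the counts of $W$-, $L$- and $R$-queries, with $q_W + q_L + q_R = i$. Let the adversary commit to realization~$L$ when $q_L \le q_R$ and to realization~$R$ otherwise, answering round~$1$ consistently with the committed realization. Under realization~$L$ the algorithm has queried only $q_W + q_L$ of the $i$ elements of $\OPT_1$, and Lemma~\ref{lem:median_opt} says every feasible query set must contain all of $\OPT_1$, so the algorithm must still make $(1 - q_W) + (i - 1 - q_L) = q_R$ further queries in later rounds; the symmetric argument under realization~$R$ forces at least $q_L$ further queries. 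Hence the algorithm issues at least $i + \max(q_L, q_R)$ queries in total, and since $q_L + q_R = i - q_W \ge i - 1$ we have $\max(q_L, q_R) \ge \lceil (i - 1)/2 \rceil$, which gives the desired bound $\opt_1 + \lceil (i - 1)/2 \rceil$.

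The main point to verify carefully is that a second round is truly unavoidable. For the algorithm to finish in round~$1$ it would need to have queried all of $\OPT_1$ under \emph{both} candidate realizations, but these share only $W$ and together span $\{W\} \cup \{L_1, \ldots, L_{i-1}\} \cup \{R_1, \ldots, R_{i-1}\}$, so covering both would demand $q_W + q_L \ge i$ and $q_W + q_R \ge i$ simultaneously, which together with $q_W + q_L + q_R = i$ would force $q_W \ge i$---incompatible with $q_W \le 1$ for $i \ge 2$. Thus the non-trivial $\OPT_1$-elements missed in round~$1$ must be queried in a subsequent round, so at least two rounds are needed, completing the lemma.
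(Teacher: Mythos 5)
Your proof is correct and follows essentially the same approach as the paper: a symmetric instance with $i-1$ left intervals, $i-1$ right intervals, and one distinguished interval spanning both sides, where the adversary sets the distinguished interval's value so as to waste whichever side received more round-one queries. The only cosmetic difference is that your wide interval $[0,10]$ contains all possible values, so both realizations have $\opt_1=i$ and you avoid the paper's separate branch (with $\opt_1=1$) for the case where the middle interval is not queried in the first round.
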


\begin{proof}
Consider the instance with $i - 1$ copies of interval $[0, 3]$ (called {\bf left-side} intervals), $i - 1$ copies of interval $[5, 8]$ (called {\bf right-side} intervals), and one interval $[2, 6]$ (called {\bf middle} interval).
The precise values are always~$1$ for the left-side intervals, and~$7$ for right-side intervals.
The value of the middle interval depends on the behavior of the algorithm, but in all cases it will be the $i$-th smallest element.
If the algorithm does not query the middle interval in the first round, then we set its value to~$4$, so we have $\opt_1=1$ and the algorithm performs at least $\opt_1 + i = (i + 1) \cdot \opt_1$ queries.
So assume that the algorithm queries the middle interval in the first round.
If it queries more left-side than right-side intervals, then we set the value of the middle interval to~$5.5$, so all right-side intervals must be queried (and all queries of left-side intervals are wasted); otherwise, we set the middle value to~$2.5$.
In either case, we have $\opt_1=i$ and the algorithm wastes at least
$\lceil (i - 1) / 2 \rceil$ queries.
\end{proof}

\section{Relationship with the Parallel Model by Mei{\ss}ner}
\label{sec:meissner}%
In~\cite[Section~4.5]{meissner18querythesis}, Mei{\ss}ner describes a slightly different model for parallelization of queries.
There, one is given a maximum number~$r$ of {\bf batches} that can be performed, and there is no constraint on the number of queries that can be performed in a given batch.
The goal is to minimize the total number of queries performed, and the algorithm is compared to an optimal query set.
The number of uncertain elements in the input is denoted by~$n$.
In this section, we discuss the relationship between this model and the one we described in the previous sections.

Mei{\ss}ner argues that the sorting problem admits a $2$-query-competitive algorithm for $r \geq 2$ batches.
For the minimum problem with one set, she gives an algorithm which is $\lceil n^{1/r} \rceil$-query-competitive,
with a matching lower bound.
She also gives results for the selection and the minimum spanning tree problems.

\begin{theorem}
\label{thm:batchestorounds}
If there is an $\alpha$-query-competitive algorithm that performs at most~$r$ batches, then there is an algorithm that performs at most $\alpha \cdot \opt_k + r - 1$ rounds of~$k$ queries.
Conversely, if a problem has a lower bound of $\beta \cdot \opt_k + t$ on the number of rounds of~$k$ queries, then any algorithm running at most $t + 1$ batches has query-competitive ratio at least~$\beta$.
\end{theorem}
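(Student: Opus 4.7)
The plan is to establish both statements as a pair, the forward direction by a direct batch-to-rounds simulation and the converse as its contrapositive. For the forward direction, I would take the given $\alpha$-query-competitive Meißner-style algorithm $A$ and run it inside an online round-based algorithm: whenever $A$ decides on its $b$-th batch $Q_b$, my algorithm would distribute the queries of $Q_b$ over $\lceil |Q_b|/k \rceil$ consecutive rounds of at most $k$ queries (with the final round for that batch possibly incomplete). Only once all answers to $Q_b$ have been received would I invoke $A$ on the resulting history to obtain $Q_{b+1}$, which preserves the adaptivity assumed by $A$.

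The key step will be the accounting inequality $\sum_{b=1}^{r} \lceil |Q_b|/k \rceil \leq \lceil T/k \rceil + (r-1)$, where $T = \sum_b |Q_b|$ is the total number of queries performed by $A$. I would prove this by a short induction from the elementary fact that $\lceil x/k \rceil + \lceil y/k \rceil \leq \lceil (x+y)/k \rceil + 1$ for non-negative integers $x,y$. Since $A$ is $\alpha$-query-competitive, $T \leq \alpha \cdot \opt_1$, and the identity $\opt_1 \leq k \cdot \opt_k$ established in Section~\ref{sec:def} gives $\lceil T/k \rceil \leq \alpha \cdot \opt_k$. Combining these bounds, the round-based simulation uses at most $\alpha \cdot \opt_k + r - 1$ rounds, as claimed.

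For the converse I would argue by contraposition: suppose some algorithm $B$ were $\alpha$-query-competitive using at most $t+1$ batches with $\alpha < \beta$. Applying the forward construction with $r = t+1$ would yield a round algorithm solving every instance in at most $\alpha \cdot \opt_k + t < \beta \cdot \opt_k + t$ rounds, contradicting the stated lower bound; hence every such $B$ must satisfy $\alpha \geq \beta$. The main (mild) obstacle in the whole proof will be the per-batch ceiling accounting: one must show that the ``padding'' overhead from the $r$ possibly-incomplete final-of-batch rounds totals at most $r-1$ extra rounds, and this is exactly what the pairwise ceiling inequality captures inductively. Everything else is essentially a one-line simulation argument followed by a one-line contrapositive.
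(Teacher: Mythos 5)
Your proposal is correct and follows essentially the same route as the paper: simulate each batch with as many consecutive rounds of $k$ queries as needed, observe that the incomplete rounds contribute at most $r-1$ extra rounds, and obtain the converse by contraposition. Your explicit ceiling-subadditivity accounting just makes rigorous what the paper states in one line.
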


\begin{proof}
Given an $\alpha$-query-competitive algorithm~$A$ on~$r$ batches, we construct an algorithm~$B$ for rounds of~$k$ queries in the following way.
For each batch in~$A$, algorithm~$B$ simply performs all queries in as many rounds as necessary.
In between batches, we may have an incomplete round, but there are only $r-1$ such rounds.
\end{proof}

In view of Mei{\ss}ner's lower bound for the minimum problem with one set mentioned above, the following result is close to being asymptotically optimal for that problem (using $\alpha=1$).

\begin{theorem}
\label{thm:roundstobatches}
If there is an $\alpha$-round-competitive algorithm for rounds of~$k$ queries, with~$\alpha$ independent of~$k$, then there is an algorithm that performs at most~$r$ batches with query-competitive ratio $\Oh(\alpha \cdot n^{\lfloor \alpha \rfloor / (r - 1)})$, with $r \geq \lfloor \alpha \rfloor \cdot x + 1$ for an arbitrary natural number~$x$.
In particular, for $r \geq \lfloor \alpha \rfloor \cdot \lg n + 1$, the query-competitive factor is~$\Oh(\alpha)$.
\end{theorem}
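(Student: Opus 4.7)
The plan is to instrument the given $\alpha$-round-competitive algorithm $A$ as a subroutine invoked in multiple phases, each using a different value of $k$, so as to perform a geometric search for a $k \geq \opt_1$. The key observation driving everything is that as soon as $k \geq \opt_1$ one has $\opt_k = 1$, so the $\alpha$-round-competitive guarantee forces $A$ to finish within $\alpha$ rounds, which is within $\lfloor \alpha \rfloor$ rounds since round counts are integer.

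Concretely, I would set $y = \min(x, \lceil \lg n \rceil)$, define $k_i = \lceil n^{i/y} \rceil$ for $i = 1, \ldots, y$ (so $k_y \geq n$), and in phase $i$ invoke $A$ on the input with $k_i$ queries per round, simulating at most $\lfloor \alpha \rfloor$ rounds. Query answers are cached and reused across phases so that repeated queries incur no additional batch cost. Letting $i^*$ be the smallest index with $k_{i^*} \geq \opt_1$ (which exists because $k_y \geq n$), the instance must be solved by the end of phase $i^*$, and the total batch count is at most $\lfloor \alpha \rfloor \cdot i^* \leq \lfloor \alpha \rfloor \cdot y \leq \lfloor \alpha \rfloor \cdot x \leq r - 1$, i.e.\ within the batch budget.

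To bound the total number of queries, minimality of $i^*$ gives $k_{i^* - 1} < \opt_1$, hence $k_{i^*} = \Oh(n^{1/y} \cdot \opt_1)$, and the total number of queries is at most $\lfloor \alpha \rfloor \sum_{i=1}^{i^*} k_i$, a geometric series with common ratio $n^{1/y}$. The cap $y \leq \lceil \lg n \rceil$ guarantees $n^{1/y} \geq n^{1/(2 \lg n)} = \sqrt{2}$, keeping the ratio bounded away from $1$, so the series is dominated by its last term and $\sum_{i=1}^{i^*} k_i = \Oh(k_{i^*}) = \Oh(n^{1/y} \opt_1)$. A small case check then shows $n^{1/y} = \Oh(n^{1/x})$ in both regimes: equality when $y = x$, and when $y = \lceil \lg n \rceil < x$ both sides are $\Oh(1)$ since $x > \lg n$. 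Thus the query-competitive ratio is $\Oh(\alpha \cdot n^{1/x}) = \Oh(\alpha \cdot n^{\lfloor \alpha \rfloor / (r-1)})$, and the ``in particular'' case is immediate: for $r \geq \lfloor \alpha \rfloor \lg n + 1$ we may take $x \geq \lg n$, whence $n^{1/x} \leq 2$ and the bound simplifies to $\Oh(\alpha)$.

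I expect the main obstacle to be the tail behavior of the geometric series when $n^{1/x}$ is close to $1$: a naive instantiation with $y = x$ incurs an undesirable $\Theta(x / \lg n)$ factor in that regime. The remedy, namely capping $y$ at $\lceil \lg n \rceil$, is conceptually simple but easy to overlook; beyond $\lceil \lg n \rceil$ phases, further subdivision only wastes batches without improving the query-competitive ratio, and this cap is precisely what yields the ``in particular'' clause of the theorem.
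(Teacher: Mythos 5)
Your proposal is correct and follows essentially the same approach as the paper's proof: a geometric schedule of guesses for $k$, running the $\alpha$-round-competitive algorithm for $\lfloor\alpha\rfloor$ rounds per guess, with the round-competitiveness guarantee certifying that a failed guess implies $\opt_1 > k$. Your cap $y = \min(x, \lceil \lg n\rceil)$ is a sensible refinement: it cleanly handles the regime $x > \lg n$, where the paper's estimate $\sum_{j=0}^{i-1} n^{j/x} = \Theta(n^{(i-1)/x})$ is imprecise because the ratio $n^{1/x}$ of the geometric sum is no longer bounded away from $1$.
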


\begin{proof}
Assume $r = \lfloor \alpha \rfloor \cdot x + 1$ for some natural number~$x$; otherwise we can simply leave some batches unused.
Given an $\alpha$-round-competitive algorithm~$A$ for rounds of~$k$ queries, we construct an algorithm~$B$ that performs at most~$r$ batches.
We group them into sequences of~$\lfloor \alpha \rfloor$ batches.
For the $i$-th sequence, for $i = 1, \ldots, x$, algorithm~$B$ runs algorithm~$A$ for~$\lfloor \alpha \rfloor$ rounds with $k = n^{(i-1)/x}$, until the problem is solved.
If the problem is not solved after $\lfloor \alpha \rfloor \cdot x$ batches, then algorithm~$B$ queries all the remaining intervals in one final batch.

To determine the query-competitive ratio, consider the number~$i$ of sequences of~$\lfloor \alpha \rfloor$ batches the algorithm executes.
If the problem is solved during the $i$-th sequence, then algorithm~$B$ performs at most
$
\lfloor \alpha \rfloor \cdot (\sum_{j=0}^{i-1} n^{j/x})
= \lfloor \alpha \rfloor \cdot \Theta(n^{(i-1)/x})
$
queries.
(If the problem is solved during the last batch, it performs at most $n \leq \lfloor \alpha \rfloor \cdot n^{x/x}$ queries.)
On the other hand, we claim that, if the problem is not solved after the $(i-1)$-th sequence, then the optimum solution queries at least~$n^{(i-2)/x}$ intervals.
This is because algorithm~$A$ is $\alpha$-round-competitive, so whenever the algorithm performs a sequence of~$\lfloor \alpha \rfloor$ rounds for a certain value of~$k$ and does not solve the problem, it follows that the optimum solution requires more than one round for this value of $k$, and hence more than $k$ queries.
Thus, the query-competitive ratio is at most $\lfloor \alpha \rfloor \cdot \Theta(n^{1/x}) = \Theta(\alpha \cdot n^{\lfloor \alpha \rfloor/(r-1)})$.
\end{proof}

Therefore, an algorithm that uses a constant number of batches implies an algorithm with the same asymptotic round-competitive ratio for rounds of~$k$ queries.
On the other hand, some problems have worse query-competitive ratio if we require few batches, even if we have round-competitive algorithms for rounds of~$k$ queries, but the ratio is preserved by a constant if the number of batches is sufficiently large.

\section{Final Remarks}
\label{sec:future}

We propose a model with parallel queries and the goal of minimizing the number of query rounds when solving uncertainty problems.
Our results show that, even though the techniques developed for the sequential setting can be utilized in the new framework, they are not enough, and some problems are harder (have a higher lower bound on the competitive ratio).

One interesting open question is how to extend our algorithms for \Minimum to the variant where it is not necessary to return the precise minimum value, but just to identify the minimum element.
Another problem one could attack is the following generalization of \Selection: Given multiple sets $S_1, \ldots, S_m \subseteq \mathcal{I}$ and indices $i_1, \ldots, i_m$, identify the $i_j$-smallest precise value and all elements with that value in~$S_j$, for $j = 1, \ldots, m$.
It would be interesting to see if the techniques we developed for \Minimum with multiple sets can be adapted to \Selection with multiple sets.

It would be nice to close the gaps in the round-competitive ratio, to understand if the analysis of Algorithm~\ref{fig:algA} is tight, and
to study whether randomization can help to obtain better upper bounds.
One could also study other problems in the parallel model, such as the minimum spanning tree
problem.

\bibliographystyle{plainurl}
\bibliography{../../queries}

\appendix

\section{Asymptotic Growth of $W^{-1}$}
\label{app:W-1}

For the sake of completeness, we include a proof of the following:

\begin{proposition}
\label{prop:w-1theta}
Let $W(x) = x \lg x$. It holds that $W^{-1}(x) = \Theta(x / \lg x)$.
\end{proposition}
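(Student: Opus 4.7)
The plan is to exploit the monotonicity of $W$ on $[2,\infty)$: since $W(y)=y\lg y$ is strictly increasing there, the inequality $W(f(x))\le x$ is equivalent to $f(x)\le W^{-1}(x)$, and $W(f(x))\ge x$ is equivalent to $f(x)\ge W^{-1}(x)$. Thus it suffices to plug candidate functions of the form $c\cdot x/\lg x$ into $W$ and verify the corresponding one-sided inequality. The only identity needed is $\lg(x/\lg x)=\lg x-\lg\lg x$.

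For the upper bound $W^{-1}(x)=\Oh(x/\lg x)$, I would set $y=2x/\lg x$ and evaluate $W(y)=\frac{2x}{\lg x}(1+\lg x-\lg\lg x)$. Using the eventually valid inequality $\lg\lg x\le \tfrac12\lg x$, this is at least $\frac{2x}{\lg x}\cdot\tfrac12\lg x=x$. Monotonicity of $W$ then yields $W^{-1}(x)\le 2x/\lg x$ for all sufficiently large $x$.

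For the lower bound $W^{-1}(x)=\Omega(x/\lg x)$, I would set $y=x/\lg x$ and evaluate $W(y)=\frac{x}{\lg x}(\lg x-\lg\lg x)\le \frac{x}{\lg x}\cdot\lg x=x$, so $W^{-1}(x)\ge x/\lg x$.

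The only delicate point is choosing a threshold $x_0$ large enough so that both $y=2x/\lg x\ge 2$ (so we are on the monotone branch of $W$) and $\lg\lg x\le \tfrac12\lg x$; below $x_0$ the bounds hold with a different constant, absorbed into the $\Theta$. I do not expect any genuine obstacle here—the calculation is a one-line squeeze in each direction.
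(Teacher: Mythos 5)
Your proof is correct and establishes the same two-sided squeeze $x/\lg x \le W^{-1}(x) \le 2x/\lg x$ as the paper, just verified from the forward direction (applying $W$ to the candidate bounds and invoking monotonicity of $W$) rather than by substituting $x = W^{-1}(x)\lg W^{-1}(x)$ and bounding $\lg\bigl(W^{-1}(x)\lg W^{-1}(x)\bigr)$ as the paper does. The only practical difference is that the paper's inequality $y\lg y\le y^2$ for $y\ge 2$ makes its upper bound valid for all $x\ge 2$, whereas your use of $\lg\lg x\le \tfrac12\lg x$ forces a threshold (roughly $x\ge 16$) --- which you correctly flag and which is harmless for a $\Theta$ statement.
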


\begin{proof}
First we claim that $W^{-1}(x) \geq 2$ for $x \geq 2$.
It holds that $W^{-1}(x)$ is injective for $x > 0$, because $y \lg y$ is increasing for $y > 1$, and $y \lg y \leq 0$ for $0 < y \leq 1$ but $y \lg y > 0$ for $y > 1$.
Thus $W^{-1}(2)$ is unique and it is easy to check that $W^{-1}(2) = 2$.
By implicit differentiation,
\begin{eqnarray*}
x & = & W^{-1}(x) \lg W^{-1}(x) \\
1 & = & \left( \lg W^{-1}(x) + \frac{1}{\ln 2} \right) \frac{dW^{-1}(x)}{dx} \\
\frac{dW^{-1}(x)}{dx} & = & \frac{1}{\left( \lg W^{-1}(x) + \frac{1}{\ln 2} \right)},
\end{eqnarray*}
which is greater than zero for $x > 0$, because $W^{-1}(x) > 1$ for $x > 0$, because $\lg y$ is not defined for $y \leq 0$, and $\lg y \leq 0$ for $0 < y \leq 1$.
Therefore $W^{-1}(x)$ is increasing and $W^{-1}(x) \geq 2$ for $x \geq 2$.

Now we prove the asymptotic bounds for $x \geq 2$.
First,
\begin{displaymath}
\frac{x}{\lg x} = \frac{W^{-1}(x) \lg W^{-1}(x)}{\lg (W^{-1}(x) \lg W^{-1}(x))} \leq W^{-1}(x) \cdot \frac{\lg W^{-1}(x)}{\lg W^{-1}(x)} = W^{-1}(x),
\end{displaymath}
where the inequality holds because $y \lg y \geq y$ for $y = W^{-1}(x) \geq 2$.
Furthermore,
\begin{displaymath}
\frac{2x}{\lg x} = \frac{2 W^{-1}(x) \lg W^{-1}(x)}{\lg (W^{-1}(x) \lg W^{-1}(x))} \geq \frac{2 W^{-1}(x) \lg W^{-1}(x)}{\lg (W^{-1}(x))^2} = \frac{2 W^{-1}(x) \lg W^{-1}(x)}{2 \lg W^{-1}(x)} = W^{-1}(x),
\end{displaymath}
where the inequality holds because $y \lg y \leq y^2$ for $y = W^{-1}(x) \geq 2$.

Thus, we have shown that $\frac{x}{\lg x} \le W^{-1}(x) \le \frac{2x}{\lg x}$ for all $x\ge 2$.
\end{proof}

\section{Selection with Arbitrary Intervals}
\label{app:medianopen}

\newcommand{\ta}{\mathrm{ta}}

In this section we prove that Theorem~\ref{th:selection} also holds for arbitrary intervals.

An instance of the \Selection problem is given by a set $\mathcal{I}$ of $n$ intervals and
an integer~$i$, $1\le i\le n$.
The $i$-th smallest value in the set of $n$ precise values is denoted by~$v^*$.
The task is to find $v^*$ as well as identify all intervals in $\mathcal{I}$
whose precise value equals~$v^*$.

We allow arbitrary intervals as input: trivial intervals containing a single value,
open intervals, closed intervals, and intervals that are closed on one side and
open on the other.

Lemma~\ref{lem:median_opt} and its proof hold also for arbitrary intervals without
any changes.

We call the left endpoint $\ell_i$ of an interval $I_i$ an \textbf{open left endpoint} if the interval
does not contain $\ell_i$ and a \textbf{closed left endpoint} otherwise.
The definitions of the terms \textbf{open right endpoint} and \textbf{closed right endpoint}
are analogous. When we order the left endpoints of the intervals in non-decreasing
order, equal left endpoints are ordered as follows: closed left endpoints come before
open left endpoints. When we order the right endpoints of the intervals in non-decreasing
order, equal right endpoints are ordered as follows: open right endpoints come before
closed right endpoints. Equal left endpoints that are all open can be ordered arbitrarily,
and the same holds for equal left endpoints that are all closed, for equal right endpoints
that are all open, and for equal right endpoints that are all closed.
Informally, the order of left endpoints orders intervals in
order of the ``smallest'' values they contain, and the order of right endpoints
orders intervals in order of the ``largest'' values they contain.
We call the resulting order of left endpoints $\preceq_L$ and the resulting
order of right endpoints $\preceq_U$.
We say that a right endpoint $u_{i_1}$ \textbf{strictly precedes} a right endpoint
$u_{i_2}$ if either $u_{i_1}<u_{i_2}$ or $u_{i_1}=u_{i_2}$ and $u_{i_1}$ is an
open right endpoint and $u_{i_2}$ is a closed right endpoint.

Let $I_{j_1}$ be the interval with the $i$-th smallest left endpoint (i.e., the $i$-th left endpoint
in the order~$\preceq_L$), and let $I_{j_2}$ be the interval with the $i$-th smallest right
endpoint (i.e, the $i$-th right endpoint in the order $\preceq_U$).
Then it is clear that $v^*$ must lie in the interval $I_{\ta}$, which we call the {\bf target area} and define
as follows:
\begin{itemize}
\item If $\ell_{j_1}$ is an open left endpoint of $I_{j_1}$ and
         $u_{j_2}$ is an open right endpoint of $I_{j_2}$,
	 then $I_{\ta}=(\ell_{j_1},u_{j_2})$.
\item If $\ell_{j_1}$ is an open left endpoint of $I_{j_1}$ and
         $u_{j_2}$ is a closed right endpoint of $I_{j_2}$,
	 then $I_{\ta}=(\ell_{j_1},u_{j_2}]$.
\item If $\ell_{j_1}$ is a closed left endpoint of $I_{j_1}$ and
         $u_{j_2}$ is an open right endpoint of $I_{j_2}$,
	 then $I_{\ta}=[\ell_{j_1},u_{j_2})$.
\item If $\ell_{j_1}$ is a closed left endpoint of $I_{j_1}$ and
         $u_{j_2}$ is a closed right endpoint of $I_{j_2}$,
	 then $I_{\ta}=[\ell_{j_1},u_{j_2}]$.
\end{itemize}

%
The following lemma was essentially shown by Kahan~\cite{kahan91queries}; for the sake
of completeness, we give a proof for arbitrary intervals.

\begin{lemma}[Kahan, 1991; version of Lemma~\ref{lem:existsa} for arbitrary intervals]
\label{lem:existsa-arb}%
Assume that the current instance of {\em \Selection} is not yet solved. Then there is at least
one non-trivial interval $I_j$ in $\mathcal{I}$ that contains the target area~$I_\ta$.
\end{lemma}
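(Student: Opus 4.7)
The plan is to follow the structure of the proof of Lemma~\ref{lem:existsa}, but with the usual $\le$ and $<$ on endpoints replaced by the tie-breaking orders $\preceq_L$ and $\preceq_U$, and with an added case analysis to handle trivial target areas that can now arise only when both boundary endpoints are closed.

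First I will split on whether $I_\ta$ is a single point or a proper interval. In the degenerate case $I_\ta=\{v^*\}$ the left endpoint of $I_{j_1}$ and the right endpoint of $I_{j_2}$ must both be closed and equal (any other combination of open/closed would make $I_\ta$ empty, contradicting $v^*\in I_\ta$). If no non-trivial interval contains this point $v^*$, then every non-trivial interval must lie entirely on one side of $v^*$, while every trivial interval has a known value; in that situation the instance is already solved, contradicting the hypothesis that it is not.

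Next, suppose $I_\ta$ contains more than one point. The key technical step is to establish the characterization
\[
I_j\supseteq I_\ta \;\Longleftrightarrow\; \ell_j\preceq_L\ell_{j_1}\ \text{and}\ u_j\succeq_U u_{j_2}.
\]
This is exactly the reason for the specific tie-breaking conventions: at equal left-endpoint values, closed precedes open in $\preceq_L$, which is precisely what is needed for $\ell_j\preceq_L\ell_{j_1}$ to mean that $I_j$ contains every point of $I_\ta$ sufficiently close to $\ell_{j_1}$ (and similarly on the right with $\preceq_U$). I will verify the equivalence by a short case analysis over the open/closed status of the four endpoints $\ell_j,\ell_{j_1},u_j,u_{j_2}$.

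With this characterization in hand the argument of Lemma~\ref{lem:existsa} carries over. Assume for contradiction that no interval contains $I_\ta$. By the characterization, no $I_j$ satisfies both $\ell_j\preceq_L\ell_{j_1}$ and $u_j\succeq_U u_{j_2}$. Since $\ell_{j_1}$ is the $i$-th element in the order $\preceq_L$, there are at least $i$ intervals with $\ell_j\preceq_L\ell_{j_1}$; by assumption each of them must satisfy $u_j\prec_U u_{j_2}$, which would force the $i$-th smallest right endpoint to strictly precede $u_{j_2}$ in $\preceq_U$, contradicting the definition of $u_{j_2}$. Hence some $I_j$ with $I_j\supseteq I_\ta$ exists, and since $I_\ta$ contains more than one point this $I_j$ cannot be trivial.

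The main obstacle is purely notational: the case analysis showing that $\preceq_L$ and $\preceq_U$ are set up exactly to encode set-containment of the boundary is the technical heart of the proof, while the final counting step is essentially the same as in the closed-interval version.
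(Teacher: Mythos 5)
Your proposal is correct and follows essentially the same route as the paper's proof: the trivial target area is dispatched by observing the instance would already be solved, and the non-trivial case uses the same counting argument over the orders $\preceq_L$ and $\preceq_U$, with your containment characterization simply making explicit what the paper states in one line. One small caveat: because ties among all-closed (or all-open) equal endpoints are broken arbitrarily, only the direction ($\ell_j\preceq_L\ell_{j_1}$ and $u_j\not\prec_U u_{j_2}$) $\Rightarrow$ $I_j\supseteq I_\ta$ of your claimed equivalence holds in general, but that is exactly the direction your argument uses.
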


\begin{proof}
First, assume that the target area is trivial, i.e., $I_\ta=\{v^*\}$. If there is no
non-trivial interval in $\mathcal{I}$ that contains $v^*$, then the instance is already solved,
a contradiction.

Now, assume that the target area $I_\ta$ is non-trivial. Assume that no interval in $\mathcal{I}$
contains the target area. Then all intervals $I_j$ whose left endpoint is not after $\ell_{j_1}$
in the order of left endpoints must have a right endpoint that strictly precedes $u_{j_2}$.
There are at least $i$ such intervals (because
$\ell_{j_1}$ is the $i$-th smallest left endpoint), and hence the $i$-th smallest right endpoint
must strictly precede $u_{j_2}$ in the order of right endpoints, a contradiction to the
definition of $u_{j_2}$.
\end{proof}

The intervals that intersect the target area can be classified into four categories:
\begin{enumerate}[(1)]
 \item $a$ non-trivial intervals that {\bf contain} the target area;
 \item $b$ intervals that {\bf are strictly contained} in the target area such that the target area contains at least one point to the left of the interval and at least one point to the right of the interval.
 \item $c$ intervals that contain some part of $I_{\ta}$ at the left end and do
 not contain some part of $I_{\ta}$ on the right end.
 Formally, an interval
 $I_i$ with closed right endpoint $u_i$ is in this category if $u_i\in I_\ta$,
 $I_i\cap I_\ta=\{v\in I_{\ta}\mid v\le u_i\}$ and
 $\{v\in I_{\ta}\mid v> u_i\}\neq\emptyset$.
 Moreover, an interval
 $I_i$ with open right endpoint $u_i$ is in this category if $u_i\in I_\ta$ and
 $I_i\cap I_\ta=\{v\in I_{\ta}\mid v < u_i\}\neq \emptyset$.
 \item $d$ intervals that contain some part of $I_{\ta}$ at the right end and do
 not contain some part of $I_{\ta}$ on the left end.
 Formally, an interval
 $I_i$ with closed left endpoint $\ell_i$ is in this category if $\ell_i\in I_\ta$,
 $I_i\cap I_\ta=\{v\in I_{\ta}\mid v\ge \ell_i\}$ and
 $\{v\in I_{\ta}\mid v< \ell_i\}\neq\emptyset$.
 Moreover, an interval
 $I_i$ with open left endpoint $\ell_i$ is in this category if $\ell_i\in I_\ta$ and
 $I_i\cap I_\ta=\{v\in I_{\ta}\mid v > \ell_i\}\neq \emptyset$.
\end{enumerate}

We propose the following algorithm for rounds with $k$ queries.
Each round is filled with as many intervals as possible, using the following order: First all intervals of category~(1); then intervals of category~(2); then picking intervals alternatingly from categories~(3) and~(4), starting with category (3). If one of the two categories is exhausted, the rest of the $k$ queries is chosen from the other category.
Intervals of categories~(3) and~(4) are picked in order of non-increasing length of overlap with the target area.
More precisely, intervals of category~(3) are chosen according to the reverse of the order~$\preceq_U$ of their right endpoints,
and intervals of category~(4) are chosen according to the order~$\preceq_L$ of their left endpoints.
When a round is filled, it is queried, and the algorithm restarts, calculating a new target area and redistributing the intervals into the categories.

\begin{proposition}[Version of Proposition~\ref{prop:bbound} for arbitrary intervals]
\label{prop:bbound-arb}
At the start of any round, $a \geq 1$
and $b \le a-1$.
\end{proposition}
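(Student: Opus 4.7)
My plan is to mirror the argument used in the proof of Proposition~\ref{prop:bbound} in the main text, with the only extra work being to handle open/closed endpoints carefully via the orderings $\preceq_L$ and $\preceq_U$. The first step is to dispatch the bound $a\ge 1$ directly from Lemma~\ref{lem:existsa-arb}. Then, I would separate into two cases according to whether the target area~$I_\ta$ is trivial or not. If $I_\ta = \{v^*\}$, no interval can be ``strictly contained in the target area with a point of $I_\ta$ both to its left and to its right'', so $b=0$ and the desired bound $b\le a-1$ is immediate from $a \ge 1$.

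The interesting case is when $I_\ta$ is non-trivial. I would define $L$ as the set of intervals disjoint from $I_\ta$ lying entirely before it, and $R$ as the analogous set of intervals lying entirely after it. Every interval of $\mathcal{I}$ either misses~$I_\ta$ (and lies in $L$ or $R$) or meets $I_\ta$ (and lies in exactly one of the categories (1)--(4)), so $a+b+c+d+|L|+|R|=n$. I would then establish two counting inequalities analogous to the closed-interval case: (i) every interval whose left endpoint is at or before $\ell_{j_1}$ in the order $\preceq_L$ lies in $L\cup(1)\cup(3)$, yielding $|L|+a+c\ge i$; and (ii) every interval whose right endpoint is at or after $u_{j_2}$ in the order $\preceq_U$ lies in $R\cup(1)\cup(4)$, yielding $|R|+a+d\ge n-i+1$. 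Adding these and subtracting the partition identity gives $b\le a-1$ exactly as before.

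The main obstacle is verifying inclusion (i) (and symmetrically (ii)) rigorously when the endpoints of $I_\ta$ may be open or closed. Concretely, I need to check that an interval $I_j$ with $\ell_j \preceq_L \ell_{j_1}$ cannot lie in $R$ (since it contains arbitrarily small values of $I_\ta$ unless disjoint from $I_\ta$ on the left), cannot lie in category~(2) or~(4) (both of which force $\ell_j \succ_L \ell_{j_1}$ by their defining set-equalities, including the subtle cases where an interval's endpoint coincides numerically with $\ell_{j_1}$ but differs in open/closed type), and so must be in $L$, (1), or (3). The convention in the definitions of $\preceq_L$ and $\preceq_U$ (closed left endpoints precede open ones at equal values; open right endpoints precede closed ones at equal values) is precisely what makes this clean: it ensures that ``left endpoint among the first $i$ in $\preceq_L$'' coincides with ``either disjoint from $I_\ta$ on the left, or contains the leftmost portion of $I_\ta$''. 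I would carry out this case check by writing out the four combinations of open/closed for $\ell_{j_1}$ and for $\ell_j$ and confirming the above dichotomy; the argument is routine but must be stated in order for the counting step to be justified.
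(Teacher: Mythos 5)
Your proposal follows the paper's own proof essentially verbatim: $a\ge 1$ from Lemma~\ref{lem:existsa-arb}, the trivial/non-trivial case split for $I_\ta$, the partition identity $a+b+c+d+|L|+|R|=n$, and the two counting inequalities $|L|+a+c\ge i$ and $|R|+a+d\ge n-i+1$ added together. The only difference is that you spell out the open/closed endpoint case check justifying the inclusions more explicitly than the paper does, which is a correct and welcome elaboration rather than a deviation.
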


\begin{proof}
Lemma~\ref{lem:existsa-arb} shows $a\ge 1$. If the target area is trivial,
we have $b=0$ and hence $b\le a-1$. From now on assume that the
target area is non-trivial.

Let $L$ be the set of intervals in $\mathcal{I}$ that lie to the left
of $I_\ta$ (and have empty intersection with $I_\ta$).
Similarly, let $R$ be the set of intervals that lie to the right of
$I_\ta$ (and have empty intersection with $I_\ta$).
Observe that $a+b+c+d+|L|+|R|=n$.

The intervals in $L$ and the intervals of type (1) and (3) include
all intervals with left endpoint not after $\ell_{j_1}$ in the order~$\preceq_L$.
As $\ell_{j_1}$ is the $i$-th left endpoint in that order, we have $|L|+a+c \ge i$.

Similarly, the intervals in $R$ and the intervals of type (1) and (4)
include all intervals with right endpoint not before $u_{j_2}$ in the order~$\preceq_U$.
As $u_{j_2}$ is the $i$-th smallest right endpoint in that order, or equivalently
the $(n-i+1)$-th largest right endpoint in that order, we have
$|R|+a+d\ge n-i+1$.

Adding the two inequalities derived in the previous two
paragraphs, we get
$2a+c+d+|L|+|R|\ge n+1$.
Combined with $a+b+c+d+|L|+|R|=n$, this yields $b \le a-1$.
\end{proof}

Lemma~\ref{lem:lastround} and its proof hold for arbitrary
intervals without any changes.

\begin{theorem}
There is a $2$-round-competitive algorithm for \Selection\ even
if arbitrary intervals are allowed as input.
\end{theorem}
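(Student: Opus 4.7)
The plan is to mirror the proof of Theorem~\ref{th:selection} essentially verbatim, with the supporting lemmas replaced by their arbitrary-interval generalizations that have been established above. Since Lemma~\ref{lem:median_opt} already holds without change for arbitrary intervals, Lemma~\ref{lem:existsa-arb} provides $a\ge 1$, Proposition~\ref{prop:bbound-arb} provides $b\le a-1$, and Lemma~\ref{lem:lastround} holds as stated, every structural ingredient used in the closed case is available.

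Concretely, I would consider a round of the algorithm that is not the last one and let $A$, $B$, $C$, $D$ denote the intervals of categories (1)--(4) queried in that round, with $A^*, B^*, C^*, D^*$ their intersections with $\OPT_1$. By Lemma~\ref{lem:median_opt}, every non-trivial interval containing $v^*$ lies in $\OPT_1$; in particular every category-(1) interval is in $\OPT_1$, so $|A|=|A^*|$, and Lemma~\ref{lem:existsa-arb} guarantees $|A|\ge 1$. Since the algorithm prioritizes category (1) over (2), Proposition~\ref{prop:bbound-arb} yields $|B|\le|A|-1$, giving $|A\cup B|\le 2|A^*|-1\le 2(|A^*|+|B^*|)-1$, exactly as in the closed case.

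The one point that needs a small extra sanity check is that, within each of categories (3) and (4), the ordering used by the algorithm still selects the intervals containing $v^*$ first. For category (3), the algorithm processes intervals in the reverse of $\preceq_U$: a category-(3) interval $I_j$ contains $v^*$ iff $v^*\in I_j$, which by the definition of category (3) is equivalent to its right endpoint not strictly preceding $v^*$ in $\preceq_U$; hence all $I_j\in C$ that contain $v^*$ are chosen before any $I_j\in C$ that does not. The analogous statement for category (4) follows from the dual argument using $\preceq_L$. With this observation in hand, Lemma~\ref{lem:lastround} supplies a category, say (3), in which the algorithm queries no category-(3) interval missing $v^*$, so $|C|=|C^*|$, and the alternating selection yields $|D|\le|C|$, giving $|C\cup D|\le 2|C^*|\le 2(|C^*|+|D^*|)$; the symmetric case gives $|C\cup D|\le 2(|C^*|+|D^*|)+1$.

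Combining the two bounds gives $|A\cup B\cup C\cup D|\le 2(|A^*|+|B^*|+|C^*|+|D^*|)$, so at most half of the queries in any non-final round are wasted. Summing over non-final rounds bounds the total number of wasted queries by $\opt_1$, and an invocation of Proposition~\ref{prop:wasted}, together with the fact that the algorithm fills each non-final round and queries every interval of $\OPT_1$, yields $\ALG_k\le 2\opt_k$. I expect the only mildly delicate step is the endpoint bookkeeping in verifying that the reverse of $\preceq_U$ and the order $\preceq_L$ really do put $v^*$-containing category-(3) and category-(4) intervals first when open/closed endpoints coincide at $v^*$; this is where the conventions chosen for $\preceq_L$ and $\preceq_U$ are designed to make the argument go through cleanly.
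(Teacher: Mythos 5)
Your proposal is correct and follows essentially the same route as the paper, which likewise proves the arbitrary-interval case by rerunning the proof of Theorem~\ref{th:selection} with Lemma~\ref{lem:existsa-arb} and Proposition~\ref{prop:bbound-arb} substituted for their closed-interval counterparts. Your extra check that the reverse of $\preceq_U$ and the order $\preceq_L$ still place the $v^*$-containing intervals first within categories (3) and (4) is a sensible verification of a step the paper leaves implicit.
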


\begin{proof}
The proof is identical to the proof of Theorem~\ref{th:selection},
except that Lemma~\ref{lem:existsa-arb} and Proposition~\ref{prop:bbound-arb}
are used in place of Lemma~\ref{lem:existsa} and Proposition~\ref{prop:bbound},
respectively.
\end{proof}

\end{document}

\endinput